\titlespacing{\section}{0pt}{2ex}{1ex}
\titlespacing{\subsection}{0pt}{1ex}{0ex}
\theoremstyle{remark}
\newcommand\scalemath[2]{\scalebox{#1}{\mbox{\ensuremath{\displaystyle #2}}}}
\newtheorem{theorem}{\bf \emph{Theorem}}
\newtheorem{lemma}{\bf \emph{Lemma}}
\newtheorem{proposition}{\bf \emph{Proposition}}
\newtheorem{remark}{\textit{Remark}}
 \def\cF{{\mathcal{F}}}  
  \def\cK{{\mathcal{K}}} \def\cL{{\mathcal{L}}}
\def\cM{{\mathcal{M}}} \def\cN{{\mathcal{N}}} \def\cO{{\mathcal{O}}}
\def\ba{{\mathbf{a}}} \def\bb{{\mathbf{b}}}  \def\bd{{\mathbf{d}}} \def\be{{\mathbf{e}}} 
\def\bff{{\mathbf{f}}}    
   \def\bn{{\mathbf{n}}} 
   \def\bs{{\mathbf{s}}} \def\bt{{\mathbf{t}}}
\def\bu{{\mathbf{u}}} \def\bv{{\mathbf{v}}}  \def\bx{{\mathbf{x}}} \def\by{{\mathbf{y}}}
\def\bA{{\mathbf{A}}} \def\bB{{\mathbf{B}}} \def\bC{{\mathbf{C}}} \def\bD{{\mathbf{D}}} 
\def\bF{{\mathbf{F}}} \def\bG{{\mathbf{G}}} \def\bH{{\mathbf{H}}} \def\bI{{\mathbf{I}}}
\def\bU{{\mathbf{U}}} \def\bV{{\mathbf{V}}} \def\bW{{\mathbf{W}}} \def\bX{{\mathbf{X}}}
\def\N{{\mathbb{N}}}  \def\R{{\mathbb{R}}} \def\C{{\mathbb{C}}}       \def\E{\mathbb{E}}
\def\blockdiag{\mathop{\mathrm{blk}}}
\def\argmin{\mathop{\mathrm{argmin}}}
\def\det{\mathop{\mathrm{det}}}
\def\tr{\mathop{\mathrm{tr}}}
\def\diag{\mathop{\mathrm{diag}}}
\def\mod{\mathop{\mathrm{mod}}}
     \def\d4{\!\!\!\!}
\def\bSig{\mathbf{\Sigma}}   \def\bLam{\mathbf{\Lambda}}
  \def\-{\! - \!}  \def\+{\! + \!}  \def\={\! = \!}  \def\>{\! > \!}
\newcommand{\bef}{\begin{figure}}
\newcommand{\eef}{\end{figure}}
\newcommand{\beq}{\begin{eqnarray}}
\newcommand{\eeq}{\end{eqnarray}}
\def\sumK{\sum_{k=1}^{K}}
\def\sumN{\sum_{n=1}^{N}}
\def\sumM{\sum_{m=1}^{M}}
\begin{document}
\title{Joint Delay-Phase Precoding Under True-Time Delay Constraints in Wideband Sub-THz Hybrid Massive MIMO Systems}
\author{Dang~Qua~Nguyen,~\IEEEmembership{Student Member,~IEEE} and Taejoon~Kim,~\IEEEmembership{Senior Member,~IEEE} \\
\thanks{Parts of this work was presented in the {IEEE} International Conference on Communications (ICC) 2022 \cite{Qua2022}. The preprint of this work can be found at \cite{Qua2023}. 
Dang Qua Nguyen is with the Department of Electrical Engineering and Computer Science, The University of Kansas, Lawrence, KS, 66045 USA (e-mail: quand@ku.edu).
Taejoon Kim is with the School of Electrical, Computer and Energy Engineering, Arizona State University, Tempe, AZ  85287 USA (e-mail: taejoonkim@asu.edu).

This work was supported in part by the National Science Foundation (NSF) under Grant CNS2212565, Grant CNS2225577, Grant ITE2226447, and Grant CNS1955561, and the Office of Naval Research (ONR) under Grant N000142112472, and the NSF and Office of the Under Secretary of Defense (OUSD) – Research and Engineering, Grant ITE2326898, as part of the NSF Convergence Accelerator Track G: Securely Operating Through 5G Infrastructure Program.
}
}
\maketitle	
\begin{abstract}
    In wideband sub-Terahertz (sub-THz) massive multiple-input multiple-output (MIMO) communication systems, the beam squint effect manifests as a substantial degradation in array gain.
    To mitigate the aforementioned beam squint effect, a hybrid precoding approach leveraging both true-time delay (TTD) and phase shifters (PS) has been proposed.
    However, existing methods operate under the assumption that the TTD device can generate any desired time delay value. 
    These methods subsequently design the TTD precoder while fixing the PS precoder.
    This work presents a novel optimization framework for the joint TTD and PS precoder design, incorporating realistic time delay constraints for each TTD device.
    Unlike previous methods, our framework does not rely on the unbounded time delay assumption and optimizes the TTD and PS values jointly to cope with the practical limitations.
    Furthermore, within the context of our proposed framework, we mathematically determine the minimum number of TTD devices necessitated to achieve a predetermined target array gain.
    Simulations confirm the proposed approach exhibits performance improvement, guarantees array gain, and achieves computational efficiency.
    \end{abstract}
    \begin{IEEEkeywords}\noindent Wideband sub-THz massive mulitple-input mulitple-output (MIMO), beam squint effect, hybrid precoding, phase shifter (PS), true-time delay (TTD), and joint TTD and PS precoding.
    \end{IEEEkeywords}
\section{Introduction}
\label{secI}
       Sub-Terahertz (sub-THz) band (90-300 GHz) communication is a potential technology for the sixth generation (6G)-\&-beyond wireless systems, which are expected to support various high data rate applications such as augmented reality/virtual reality (AR/VR), eHealth, and holographic telepresence \cite{Giordani2020}. 
        The sub-THz band offers abundant spectrum resources with tens of gigahertz (GHz) bandwidths \cite{Han2014, Rappaport2019, ChenHan2021, Akyildiz2022}, compared to the current millimeter-wave (mmWave) bands in the 5G specifications \cite{3gpp.21.917} that utilize a few GHz bandwidths. 
        This enables the possibility of achieving the data rates on the orders of 10 to 100 Gbps using the existing digital modulation techniques in sub-THz frequencies \cite{Rappaport2019, Koenig2013}. 
        However, the sub-THz communication faces several challenges such as high path losses, large power consumption, and inter-symbol interference.
        To overcome these challenges, hybrid massive multiple-input multiple-output (MIMO) and orthogonal frequency-division multiplexing (OFDM) technologies have been widely investigated recently \cite{dai2021,Han2022,Gao2021,Matthaiou2021,Yezeng2023,Yuan2023,Elbir2021,Elbir2022}.
        Massive MIMO technology has received significant attention previously \cite{Marzetta2010}.  
        Strong theoretical analyses have justified the use of a very large number of antennas at the base station \cite{Marzetta2010, Larsson14, Rusek2013}. 
        This has raised significant interest in massive MIMO systems at sub-$6$ GHz \cite{Marzetta2010,Rusek2013, Larsson14} and mmWave frequencies \cite{ayach2014,Alkhateeb2014,Hadi2016}.
        The underlying assumption behind \cite{Marzetta2010,Rusek2013,Larsson14,ayach2014,Alkhateeb2014,Hadi2016} was, however, narrowband. 
        Wideband high-frequency massive MIMO OFDM systems may experience significant array gain loss across different OFDM subcarriers due to the spatial wideband effect \cite{Cai2016, Liu2019, Wang2018, Wang20182, Wang2019, Han2021, tan2020}, also known as the beam squint effect. 
       Beam squint refers to the phenomenon in which the deviation occurs in the spatial direction of each OFDM subcarrier when the wideband OFDM is used in a very large antenna array system.
       This implies that beam squint can cause severe degradation of the achievable rate, which potentially demotivates the use of OFDM in the wideband sub-THz massive MIMO systems. 
       Therefore, an efficient beam squint compensation is essential for the realization of wideband sub-THz massive MIMO communications.
%\vspace{-0.5cm}       
\subsection{Related Works}
        Previous works on wideband mmWave massive MIMO \cite{Cai2016, Liu2019} addressed the beam squint by designing the beamforming weights to generate adaptive-beamwidth beams that cover the squinted angles. 
        However, these techniques are not applicable to the sub-THz bands due to the extremely narrow pencil beam requirement imposed by the much higher carrier frequencies \cite{Monroe2022,Yezeng2022survey, Han2021}. 
        Furthermore, several hybrid precoding approaches \cite{Elbir2021, Elbir2022} have attempted to address beam squint through digital signal processing techniques. These approaches design the digital precoder by projecting it onto the subspace spanned by the analog beamforming vectors. 
        However, this approach can only partially compensate for the beam squint due to the low-rank characteristic of the analog beamforming.\looseness=-1 
        The beam squint effect of phased array antennas has been independently studied in the radar community (e.g., see \cite{Mailloux2017, Longbrake2012, Rotman2016}, and references therein).
        A common method is to use true-time delay (TTD) lines instead of using phase shifters (PSs) for analog beamforming \cite{Mailloux2017, Longbrake2012,Rotman2016,Ghaderi2019}.
        Unlike the PS-based analog beamforming that produces frequency-independent phase rotations, the TTD-based analog beamforming generates frequency-dependent phase rotations that can be used for compensating the squinted beams in the spatial domain.
        However, this method is not directly applicable to massive MIMO systems because it requires a large number of TTD lines.
        Specifically, each transmit antenna needs to be connected to a dedicated TTD, resulting in a high hardware cost and huge power consumption\footnote{While the power consumption of a TTD device depends on a specific process technology (e.g., BiCMOS \cite{Cho2018}, and CMOS \cite{Hu2015}), a typical TTD in sub-THz consumes $100$ mW \cite{dai2021}. It is worth noting that the power consumption of a typical PS in sub-THz is $20$ mW \cite{dai2021}, which is much lower than that of a TTD.}. 
        TTD lines have been recently proposed for sub-THz hybrid massive MIMO OFDM systems \cite{dai2021, Gao2021, Matthaiou2021, Han2022, Yezeng2023}.
        These systems \cite{dai2021, Gao2021, Matthaiou2021} use a significantly smaller number of TTD lines than that in radar systems \cite{Mailloux2017, Longbrake2012, Rotman2016}. 
        These methods can be seen as combining a small number of TTD lines with a layer of PSs to form an analog precoder that is able to generate frequency-dependent phase rotations while consuming less power than the conventional TTD architecture in radar systems.
        However, these approaches still incur a large amount of power consumption when a large number of TTD lines is needed to combat the severe beam squint. 
        Determining the minimum number of TTDs to achieve a desired beam squint compensation capability remains challenging.\looseness=-1
        The TTD-based hybrid precoding architecture has also been utilized to address fast beam training \cite{Cui2022}, beam tracking \cite{Tan20212}, {frequency multiplexing \cite{Cabric2022, Ratnam2022, Jain2023}},  and user localization \cite{Gao2022} problems. 
        Unlike the beam squint compensation that aligns the beam directions at every OFDM subcarrier to a same spatial direction, these works \cite{Cui2022, Tan20212, Gao2022, Cabric2022, Ratnam2022, Jain2023} exploit the beam squint effect to spread the beams across different OFDM subcarriers simultaneously. 
        By tracking users' locations simultaneously, the communication overhead for channel sounding is substantially reduced \cite{Gao2022, Cabric2022, Ratnam2022, Jain2023}.\looseness=-1
Most of these previous works have focused on designing the TTD precoder while keeping the PS precoder fixed. 
This simplifies the analog precoding design by decoupling the PS and TTD precoders, and reduces the number of design variables since the number of deployed TTDs is usually less than the number of PSs.
    Additionally, these studies have assumed that the TTD values increase linearly with the number of antennas without any limits.
        This assumption is, however, unrealistic, as the range of the time delay values that a TTD can produce is strictly limited \cite{Qua2022}, \cite{Lin20221}, \cite{Lin20222}. 
        For instance, \cite{Hu2015} and \cite{Cho2018} design TTDs with the maximum time delay values of 400 ps and 508 ps, respectively. The former occupies a smaller circuit board area of 4 $ \text{mm}^2$ while the latter occupies 5.45 $ \text{mm}^2$. This raises a critical issue; for a given circuit board size (e.g., 128 $ \text{mm}^2$ \cite{Studer2020}), we need to limit both the time delay range and the number of TTDs used to mitigate the beam squint. 
        We note that hardware components are typically static once deployed. 
        Depending on the deployment scenarios, the time delay range of the deployed TTDs may be insufficient for beam squint compensation.
        It is sustainable to optimize the tunable parameters, such as TTD and PS values, to effectively deal with the beam squint instead of replacing them with new TTDs, which is costly.\looseness=-1

\subsection{Contributions and Synopsis}
    We propose a signal processing framework to mitigate the beam squint effect based on joint TTD and PS optimization.  
    The practical TTD constraint is imposed such that the time delay values are restricted in a fixed interval. 
    Our main contributions are summarized as follows.\looseness=-1

\begin{itemize}[leftmargin = 3mm]
%
            % \item 
            % We characterize the impact of beam squint compensation on the achievable rate performance. 
            % %%
            % \textcolor{blue}{We show that the ideal analog precoder that fully compensates for the beam squint is equivalent to the one that maximizes the achievable rate with an appropriate digital precoder. This motivates us to design the PS and TTD precoders to best approximate the ideal analog precoder.}
            %%
            \item
            {We first determine the ideal analog precoder that fully compensates for the beam squint.}
            Then, we formulate the joint TTD and PS precoder optimization problem based on minimizing the distance between the ideal analog precoder and the product of TTD and PS precoders under the TTD constraints. 
            Unlike the previous approaches \cite{dai2021, Gao2021, Matthaiou2021} that only optimize the TTD precoders under the unbounded time delay values assumption, we jointly optimize the TTD and PS precoders subject to the limited range of time delay values that a practical TTD can produce.
            Although the formulated problem is non-convex and 
            difficult to solve directly, we show that by transforming the problem into the phase domain, the original problem is converted to an approximated convex problem, which allows us to find a closed-form expression of a solution.
            Based on the identified solution, our analysis reveals the number of transmit antennas and the amount of time delay required for the best beam squint compensation. 
            \item 
            Leveraging the closed-form expressions of the proposed  approach, we formulate a mixed-integer optimization problem to determine the minimum number of TTDs required to achieve a given array gain performance. 
            Although the formulated mixed-integer problem is intractable, we show that by applying a second-order approximation, the original problem can be relaxed to a tractable form, enabling us to find a solution. 
            Our finding indicates that the number of TTDs linearly increases with respect to the system bandwidth to guarantee a required array gain performance.
            \item 
            We conduct extensive simulations to evaluate the performance of the proposed joint PS and TTD precoding. 
            The simulation results confirm that the beam squint is compensated effectively with our design. 
            {Simulations reveal that our joint optimization approach achieves superior array gain performance and computational efficiency compared to prior  TTD-based precoding methods.} 
            The simulations also verify the significant array gain performance improvement with the minimum number of TTDs informed by our optimization approaches.   
\end{itemize}
     \textbf{Synopsis}: The remainder of the paper is organized as follows. 
     Section~\ref{secII} presents the channel model of the wideband sub-THz massive MIMO OFDM system and analyzes the array gain loss caused by the beam squint.  
     Section~\ref{secIII} describes the relationship between beam squint compensation by the ideal analog precoder and the achievable rate. 
     Section~\ref{secIV} derives the closed-form solution of the TTD and PS precoders under the practical TTD constraints and determines the minimum number of TTDs that ensures a predefined array gain performance. 
     Section~\ref{secV} provides simulation results to validate the developed analysis.
    Section~\ref{secVI} concludes this work.\looseness=-1

    \textit{Notation:} A bold lower case letter $\bx$ is a column vector and a bold upper case letter $\bX$ is a matrix. 
    $\bX^T$, $\bX^H$, $\bX^{-1}$, $\|\bX\|_F$, $\tr(\bX)$, $\det(\bX)$, $\bX(i,j)$, $\|\bx\|_2$, and $|x|$ are, respectively, the transpose, conjugate transpose, inverse, Frobenius norm, trace, determinant, $i$th row and $j$th column entry of $\bX$, $2$-norm of $\bx$, and modulus of $x\in \C$. 
    $\blockdiag(\bx_1, \bx_2, \dots, \bx_N)$ is an $nN \times N$ block diagonal matrix such that its main-diagonal blocks contain $\bx_i \in \C^{n \times 1}$, for $i=1,\dots,N$ and all off-diagonal blocks are zero.
    $\mathbf{0}_n$, $\mathbf{1}_{n}$, and $\bI_{n}$ denote, respectively, the $n\times 1$ all-zero vector, $n\times 1$ all-one vector, and $n\times n$ identity matrix.
    Given $\bx \in \R^{n \times 1}$, $e^{j\bx}$ denotes the column vector $[e^{jx_1}~e^{jx_2} ~\dots ~e^{jx_n}]^T \in \C^{n \times 1}$ obtained by applying $e^j$ element-wise.
\section{Channel Model and Beam Squint Effect}
\label{secII}
In this section, we present the channel model of the wideband sub-THz massive MIMO OFDM system. Then, the array gain loss caused by the beam squint is analyzed. 
%%\vsapce{-0.2cm}
\subsection{Channel Model}
  We consider the downlink of a wideband sub-THz hybrid massive MIMO OFDM system where the transmitter is equipped with an $N_t$-element uniform linear array (ULA) with element spacing $d$.
 The transmit antenna array is fed by $N_{RF}$ radio frequency (RF) chains to simultaneously transmit $N_s$ data streams to an $N_r$-element ULA receiver. 
 It is assumed that $N_t, N_r, N_{RF},$ and $N_{s}$ satisfy $ N_s = N_{RF} \leq N_r \ll N_{t}$. 
 Herein, we let $f_c$, $B$, and $K$ be, respectively, the central (carrier) frequency, bandwidth of the OFDM system, and the number of OFDM subcarriers (an odd number).
 Then, the $k$th subcarrier frequency is given, \text{ for } $k = 1,\dots,K$, by
 %%vspace{-0.1cm}
 %\vsapce{-0.3cm}
%\small
 \begin{equation}
     \label{eqfk}
     f_k = f_c + \frac{B}{K}\Big(k-1-\frac{K-1}{2}\Big).
%\vsapce{-0.3cm}
 \end{equation}\normalsize
%%vspace{-0.4cm}
 %

 %
 The frequency domain MIMO OFDM channel at the $k$th subcarrier $\bH_k \in \C^{N_r \times N_t}$ is
 %%vspace{-0.1cm}
%\vsapce{-0.3cm}
%\small
\begin{equation}
     \label{eqHk}
     \bH_k = \sqrt{\frac{N_rN_t}{L}} \sum_{l=1}^{L} \alpha_{k,l} e^{-j 2\pi \tau_l f_k} \bu_{k,l}\bv^H_{k,l},
%\vsapce{-0.3cm}
 \end{equation}\normalsize 
 where $L$ denotes the number of channel (spatial) paths and $\tau_{l} \in \R$ represents the delay of the $l$th channel path. 
The $\alpha_{k,l} \in \C$ denotes the path gain of the $k$th subcarrier on the $l$th path, which incorporates the molecular absorption loss of sub-THz wave propagation medium.
The magnitude of $\alpha_{k,l}$ is modeled by $ \E [ |\alpha_{k,l}|^2 ] = \Big(\frac{c}{4 \pi f_k \widetilde{d}}\Big)^2 e^{-\cK_{\text{abs}}(f_k)\widetilde{d}}$, where $c= 3 \times 10^8$ m/s denotes the speed of light, $\widetilde{d}$ is the transmit distance,  and $\cK_{\text{abs}}(f)$ is the frequency-dependent medium absorption coefficient \cite{Han2015, Jornet2011}. 
{We note that the sub-THz channel is unlikely to have a number of spatial paths $L$ that is larger than the number of RF chains $N_{RF}$ \cite{sarieddeen2021}. 
    For ease of exposition, we assume $L = N_{RF}$, which is equivalent to setting $\alpha_{k,L+1} = \dots = \alpha_{k,N_{RF}} = 0$ when $L < N_{RF}$, for $k=1,2,\dots,K$.
    This setting has the same effect as using $L$ RF chains for transmission while turning off $(N_{RF}-L)$ RF chains.}
In the remainder of this paper, we use the subscript $l$ for denoting the index of both channel path and RF chain unless specified otherwise. 
 The vectors $\bv_{k,l} \in \C^{N_t \times 1}$ and $\bu_{k,l} \in \C^{N_r \times 1}$ in \eqref{eqHk} are the normalized transmit and receive array response vectors of the $k$th subcarrier on the $l$th path, respectively, where the $\bv_{k,l}$ is a function of angles of departure (AoDs) $\Psi_{l}\in [\Psi_{\min},\Psi_{\max}]$.
 % %%
 The exact same definition applies to the normalized receive array response vector $\bu_{k,l}$ in \eqref{eqHk}.\looseness=-1  
 
 In what follows, we will limit our discussion to the transmit antenna array, keeping in mind that the same applies to the receive array \cite{Han2022,dai2021,Gao2021,Matthaiou2021}. 
 The antenna geometry of the transmit array is described assuming far-field spatial angles.  
 The $n$th entry of $\bv_{k,l}\in \C^{N_t \times 1}$ in \eqref{eqHk} is $\bv_{k,l}(n,1) = \frac{1}{\sqrt{N_t}}e^{-j \pi \frac{2d f_k}{c} (n-1) \sin(\Psi_l)}$.
 We define $\psi_{k,l} = \frac{2d f_k}{c}\sin(\Psi_l)$ as the spatial direction of the $k$th subcarrier at the transmitter on the $l$th path. 
Assuming the half-wavelength antenna spacing, i.e., $d = \frac{c}{2f_c}$, the spatial direction at the central frequency is simplified to $\psi_l = \sin(\Psi_l)$.
Hence, setting $\zeta_k = \frac{f_k}{f_c}$ leads to 
$\psi_{k,l} = \zeta_k \psi_{l}$ and 
%\vsapce{-0.2cm}
%\small
\begin{equation}
\zeta_k = 1 + \frac{B}{f_c}\Big(\frac{k-1-\frac{K-1}{2}}{K}\Big) \label{eqSDc},
%\vsapce{-0.3cm}
 %vspace{-0.1cm}
\end{equation}\normalsize
where \eqref{eqSDc} follows from the definition of $f_k$ in \eqref{eqfk}. 
As a result, the transmit array response vector $\bv_{k,l}$ in \eqref{eqHk} is succintly expressed as $\bv_{k,l} = \frac{1}{\sqrt{N_t}}\Big[ 1~ e^{j\pi\psi_{k,l}} ~\dots ~e^{j\pi (N_t-1)\psi_{k,l}} \Big]^H  \in  \C^{N_t \times 1}$.\looseness=-1
% %%
%%\vsapce{-0.2cm}
\subsection{Beam Squint Effect}
     As aforementioned in Section~\ref{secI}, when the wideband OFDM is employed in a massive MIMO system, a substantial array gain loss at each subcarrier could occur due to beam squint. To quantify it, we focus on the $l$th path of the channel in \eqref{eqHk}. Denoting the frequency-independent beamforming vector matched to the transmit array response vector with the AoD $\Psi_l$ as $\bff^{(l)} = \bv_{c,l}$ (i.e., array response vector at the central frequency $f_c$ on the $l$th path), the corresponding array gain at the $k$th subcarrier is given by $g(\bff^{(l)},\psi_{k,l}) = |\bv^H_{k,l}\bff^{(l)}|$ {\cite{dai2021}, \cite{Wang2018}}, i.e.,  
     %\vsapce{-0.3cm}
     %\small
     \begin{equation}
     \label{eq5}
            \d4 g(\bff^{(l)},\psi_{k,l}) = \frac{1}{N_t}\bigg| \sum_{n=0}^{N_t-1}e^{j n \pi (\psi_{k,l} - \psi_{l})}\bigg|
           =\bigg| \frac{\sin(N_t \Delta_{k,l})}{N_t \sin(\Delta_{k,l})}\bigg|,    
     %\vsapce{-0.3cm}
     \end{equation}\normalsize 
    where $\Delta_{k,l} = \frac{\pi}{2}(\psi_{k,l}-\psi_{l})$.
    It is not difficult to observe that at the central frequency, the array gain is $g(\bff^{(l)},\psi_{k,l}) = 1$ because $\lim_{x \rightarrow 0}\frac{\sin(N_t x)}{\sin(x)} = N_t$.
    However, when $f_k \neq f_c$, $\Delta_{k,l}$ deviates from $0$; the amount of deviation increases as $f_k$ approaches to $f_1$ or $f_K$.
    As a result, all subcarriers except for the central subcarrier suffer from the array gain loss. 
    The implication in the spatial domain is that the beams at non-central subcarriers may completely split from the one generated at the central subcarrier.\looseness=-1 

           \begin{figure*}[htb]
        \centering
        \subfloat[]{
\includegraphics[scale=.34,trim=0.0cm 0cm 0.0cm 0.2cm, clip]{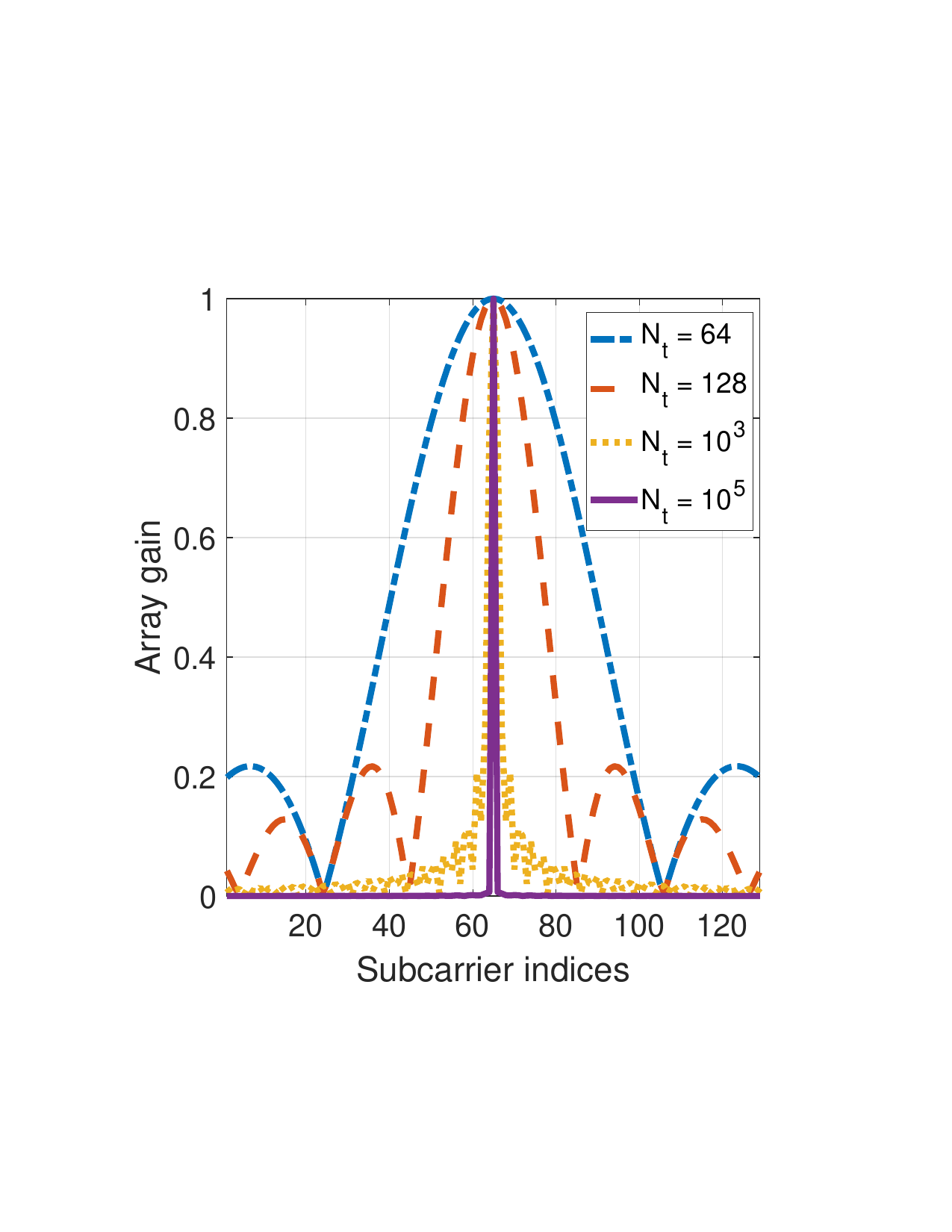}
        \label{fig1a}
        }
       \centering
        \subfloat[]{
    \includegraphics[scale=.34,trim=0cm 0cm 0.0cm 0.2cm, clip]{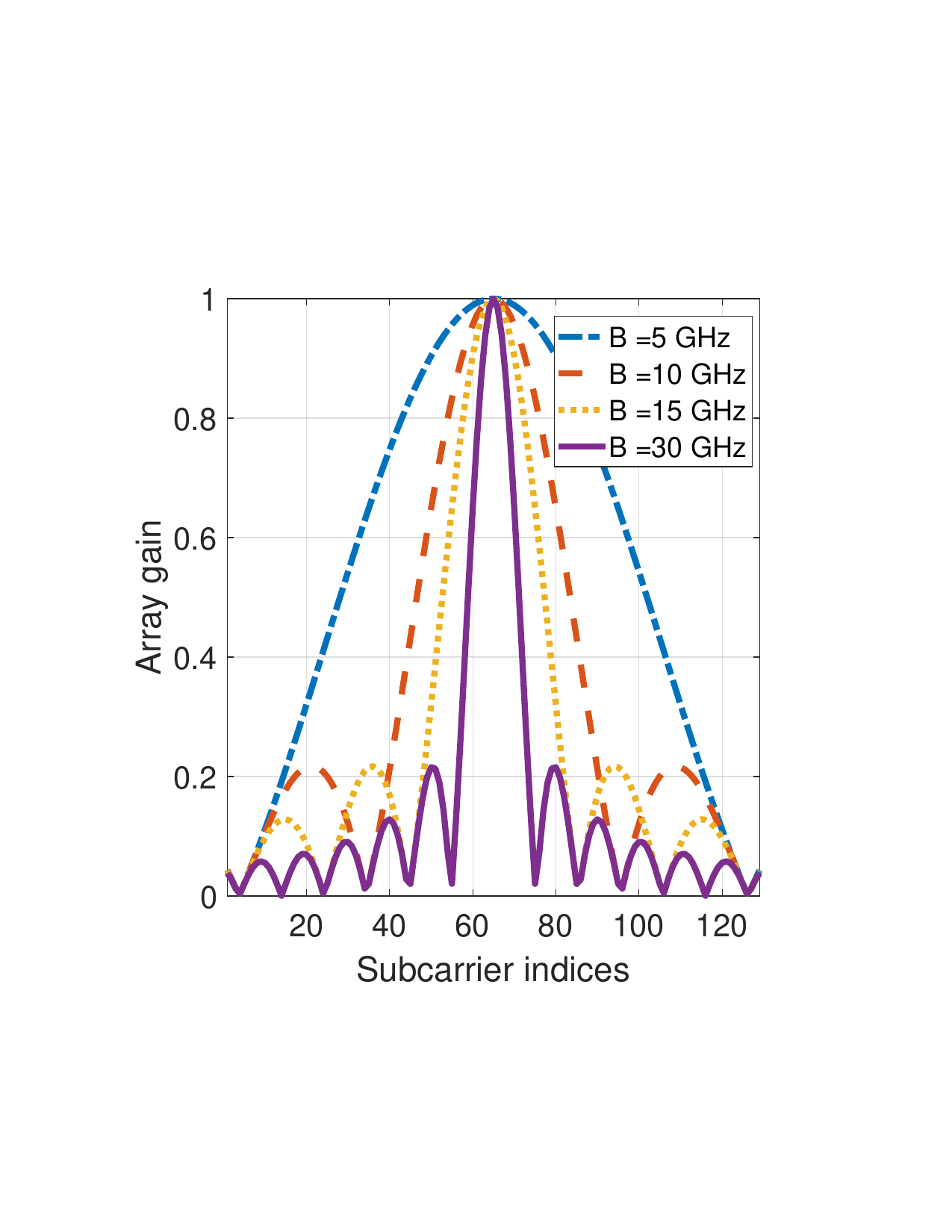}
\label{fig1b}
        }
   \centering
    \subfloat[]{
    \includegraphics[scale=.33,trim=0.0cm 0cm 0.0cm 0.2cm, clip]{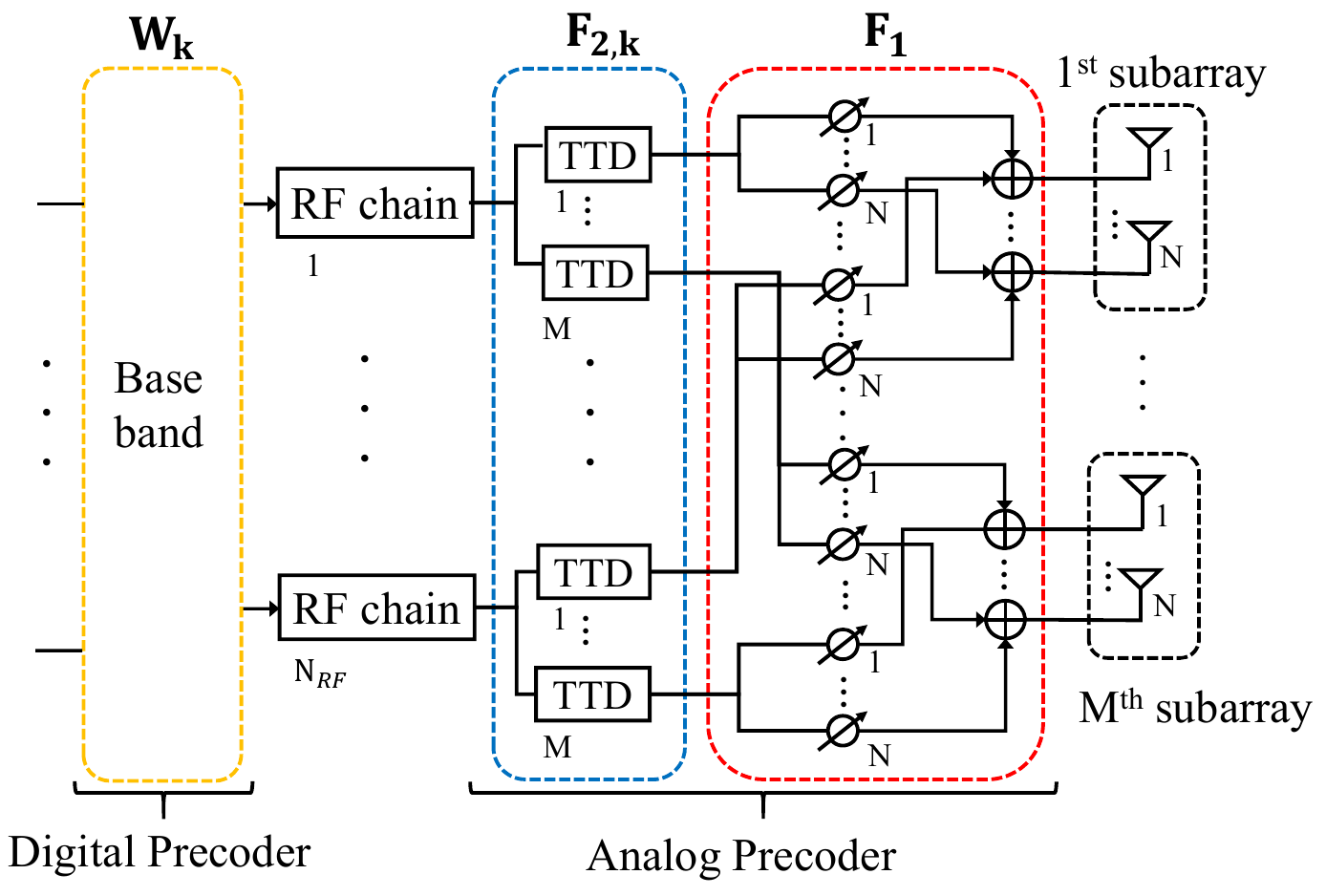}
         \label{fig2}}  
    %\vsapce{-0.5cm}
        \caption{(a) Array gain vs. subcarrier indices for different numbers of transmit antennas ($N_t$). (b) Array gain vs. subcarrier indices for different bandwidths ($B$). (c) TTD-based hybrid precoding architecture.}
        %\vspace{-0.5cm}    
    \label{fig1}
    \end{figure*}
 
 The following proposition quantifies the asymptotic array gain loss as $N_t \rightarrow \infty$.      
    \begin{proposition} 
    \label{ps1} 
    Suppose that $\psi_{k,l}$ is the spatial direction at the $k$th subcarrier ($f_k \neq f_c$) of the $l$th path. 
    Then, the array gain in \eqref{eq5} converges to $0$ as $N_t$ tends to infinity, i.e., $g(\bff^{(l)}, \psi_{k,l}) \stackrel{\boldsymbol{\cdot}}{=} 0$, where $\stackrel{\boldsymbol{\cdot}}{=}$ denotes the equality when $N_t \rightarrow \infty$.
    \end{proposition}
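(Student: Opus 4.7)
The plan is to show that the right-hand side of \eqref{eq5} is squeezed to zero as $N_t \to \infty$, because its numerator stays bounded while the denominator grows linearly in $N_t$. First, using $\psi_{k,l} = \zeta_k \psi_l$ together with \eqref{eqSDc}, I would rewrite
\begin{equation}
\Delta_{k,l} \;=\; \frac{\pi}{2}(\zeta_k - 1)\psi_l \;=\; \frac{\pi B}{2 f_c K}\Big(k-1-\frac{K-1}{2}\Big)\psi_l .
\end{equation}
The hypothesis $f_k \neq f_c$ gives $\zeta_k \neq 1$, and $\psi_l \neq 0$ is implicit in the statement (if $\psi_l = 0$ then $\psi_{k,l} = \psi_l$ and the claim is vacuous). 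Hence $\Delta_{k,l}$ is a fixed nonzero constant that does not depend on $N_t$.

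The only nontrivial thing to verify is that the denominator $|\sin(\Delta_{k,l})|$ in \eqref{eq5} is strictly positive. Since $|\psi_l|\leq 1$ and in any practical wideband regime $B < 2f_c$, I would argue $|\zeta_k - 1| < 1$, and therefore $|\Delta_{k,l}| < \pi/2 < \pi$. Combined with $\Delta_{k,l} \neq 0$, this forbids $\Delta_{k,l}$ from being an integer multiple of $\pi$, so $|\sin(\Delta_{k,l})|$ is a strictly positive constant depending only on $\psi_l$, $B$, $f_c$, $K$, and $k$.

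With these two observations in place the conclusion follows by a one-line squeeze: the trivial bound $|\sin(N_t \Delta_{k,l})|\leq 1$, applied to \eqref{eq5}, yields
\begin{equation}
0 \;\leq\; g(\bff^{(l)}, \psi_{k,l}) \;\leq\; \frac{1}{N_t\,|\sin(\Delta_{k,l})|},
\end{equation}
and the right-hand side vanishes as $N_t \to \infty$. The main (and only) subtle step is the middle paragraph's verification that $\sin(\Delta_{k,l}) \neq 0$; once that mild regularity condition is secured, Proposition~\ref{ps1} is immediate. I expect no other obstacles, since the argument does not require any fine asymptotic analysis — just a Dirichlet-kernel-style bound together with the fact that the squint-induced mismatch $\Delta_{k,l}$ is a constant in $N_t$.
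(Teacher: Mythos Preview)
Your argument is correct. The paper's own proof takes a slightly different route: it factors
\[
g(\bff^{(l)},\psi_{k,l}) \;=\; \frac{1}{N_t}\,\left|\frac{\sin(N_t\Delta_{k,l})}{\pi\Delta_{k,l}}\right|\,\left|\frac{\pi\Delta_{k,l}}{\sin(\Delta_{k,l})}\right|
\]
and then invokes the nascent Dirac delta representation $\frac{\sin(N_t\Delta_{k,l})}{\pi\Delta_{k,l}} \stackrel{\boldsymbol{\cdot}}{=} \delta(\Delta_{k,l})$, which vanishes at $\Delta_{k,l}\neq 0$. Your squeeze bound $|\sin(N_t\Delta_{k,l})|\le 1$ accomplishes the same thing more directly and, arguably, more rigorously: the Dirac delta limit is a distributional statement, not a pointwise one, so the paper's phrasing really leans on the $1/N_t$ prefactor anyway. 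What your approach buys is a fully elementary proof with an explicit rate $O(1/N_t)$ and a clean verification that $\sin(\Delta_{k,l})\neq 0$ (the paper leaves this implicit). What the paper's phrasing buys is a connection to the Dirichlet-kernel intuition, but at the cost of a slightly informal limit. Both routes ultimately rest on the same two facts: $\Delta_{k,l}$ is a nonzero constant in $N_t$, and the numerator is bounded.
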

    \begin{proof}
    The  array gain in \eqref{eq5} can be rewritten as
  %  \begin{equation*}
      $ g(\bff^{(l)},\psi_{k,l}) = \frac{1}{N_t} \Big|\frac{\sin (N_t\Delta_{k,l})}{\pi\Delta_{k,l}} \frac{\pi \Delta_{k,l}}{\sin (\Delta_{k,l})} \Big|,$ 
    %\end{equation*} 
    where $\Delta_{k,l} \neq 0$ because $f_k \neq f_c$.
    The proposition follows from the definition of Dirac delta function $\frac{\sin{(N_t \Delta_{k,l})}}{\pi \Delta_{k,l}} \stackrel{\boldsymbol{\cdot}}{=} \delta(\Delta_{k,l})$ \cite{bracewell2000}, completing the proof. 
    \end{proof}

    %%
    % %%
    Fig.~\ref{fig1a} illustrates the convergence trend of {Proposition}~1. The array gain patterns are calculated for $f_c = 300$ GHz, $K = 129$, $\psi_{l}= 0.8$, and $B= 30$ GHz. 
    As $N_t$ tends to be large, the maximum array gain is only achieved at the central frequency, i.e., the $65$th subcarrier in Fig.~\ref{fig1a}, while other subcarriers suffer from substantial array gain losses. 
    This is quite the opposite of the traditional narrowband massive MIMO system in which the array gain grows as $N_t \rightarrow \infty$.
    The array gain loss is also numerically understood when the bandwidth $B$ increases. 
    Fig.~\ref{fig1b} illustrates those patterns while using the same parameters as in Fig.~\ref{fig1a} except for that $N_t=256$.
    As the bandwidth $B$ grows, the maximum array gain is 
    obtained only at the central frequency while other subcarriers experience a large amount of array gain losses. 
    This contrasts with the information-theoretic insight that the  achievable rate grows linearly with the bandwidth while keeping the signal-to-noise-ratio (SNR) fixed.
    Wideband sub-THz massive MIMO communication research is in its early stages. In order to truly unleash the potential of sub-THz communications, a hybrid precoding architecture and methods that can effectively compensate for the beam squint effect is of paramount importance. 
\section{Preliminaries and Achievable Rate Performance}
\label{secIII}
    We consider a TTD-based hybrid precoding architecture \cite{dai2021,Gao2021}, where each RF chain drives $M$ TTDs and each TTD is connected to $N$ PSs as shown in Fig. \ref{fig2}. 
   The $N_t$-element ULA is divided into $M$ subarrays with $N = \frac{N_t}{M}$ antennas per subarray.      
    The signal at the $k$th subcarrier passing through a TTD is delayed by $t$ ($0\leq t \leq t_{\max}$) in the time domain, which corresponds to the $-2\pi f_k t$ frequency-dependent phase rotation in the frequency domain.
    The $t_{\max}$ is the maximum time delay value that a TTD device can produce.  
    The $k$th subcarrier signal at the receiver $\by_k \in \C^{N_r \times 1}$ is then given by \cite{Qua2022} 
     \begin{equation}
         \label{eq6}
         \by_k = \sqrt{\rho} \bH_k\bF_1\bF_{2,k}(\{\bt_l\}^{N_{RF}}_{l=1})\bW_k\bs_k + \bn_k,
     \end{equation}
   where $\rho$, $\bs_k \in \C^{N_s \times 1}$, $\bW_k \in \C^{N_{RF} \times N_s}$, and $\bn_k \in \C^{N_r \times 1}$ are, respectively, the average transmit power, transmit data stream, baseband digital precoder, and normal Gaussian noise vector with each entry being independent and identically distributed (i.i.d.) according to zero mean and variance $1$.
    The $\bF_1$ in \eqref{eq6} is the PS precoding matrix, which is the concatenation of $N_{RF}$ PS submatrices $\bX_l \in \C^{N_t \times M}$, $1 \leq l \leq N_{RF}$, specifically,  
   %\vsapce{-0.3cm}
    %\small
    \begin{equation}
    \label{eqF1}
    \bF_1= \frac{1}{\sqrt{N_t}} [\bX_1 ~ \bX_2 ~ \dots ~  \bX_{N_{RF}}] \in \C^{N_t \times MN_{RF}},    
  %\vsapce{-0.3cm}
    \end{equation}\normalsize 
    where $\bX_l = \blockdiag(e^{j \pi \bx^{(l)}_1},e^{j \pi \bx^{(l)}_2},\dots,e^{j \pi \bx^{(l)}_{M}}) \in \C^{N_t \times M}$, $\bx^{(l)}_m = [x^{(l)}_{1,m}~x^{(l)}_{2,m}~\dots~x^{(l)}_{N,m}]^T\in \R^{N \times 1}$ is a PS vector, and $x^{(l)}_{n,m}$ is the value of the $n$th PS that is connected to the $m$th TTD on the $l$th RF chain, for $1 \leq n \leq N$, $ 1 \leq l \leq N_{RF}$, and $1 \leq m \leq M$. 
    The $\bF_{2,k}(\{\bt_l\}^{N_{RF}}_{l=1}) \in \C^{MN_{RF} \times N_{RF}}$ in \eqref{eq6} is the TTD precoding matrix, which is defined as
    %%
   %\vsapce{-0.3cm}
   %\small
    \begin{equation}
    \label{eqF2k}
     \d4\scalemath{0.9}{\bF_{2,k}(\{\bt_l\}^{N_{RF}}_{l=1}) = \blockdiag(e^{-j2\pi f_k \bt_1},e^{-j2\pi f_k \bt_2}, \dots, e^{-j2\pi f_k \bt_{N_{RF}}}),} 
    %\vsapce{-0.3cm}
    \end{equation}    %%
      where $\bt_l = [t^{(l)}_1~t^{(l)}_2~\dots~ t^{(l)}_{M}]^T \in \R^{M \times 1}$ is the $l$th time delay vector and $t^{(l)}_m$ is the time delay value of the $m$th TTD on the $l$th RF chain.
    For ease of exposition, in what follows, we omit the time delay vectors $\{\bt_l\}^{N_{RF}}_{l=1}$ in the TTD precoder notation $\bF_{2,k}(\{\bt_l\}^{N_{RF}}_{l=1})$, $\forall k$.
    We note that the analog precoder corresponds to the product $\bF_1\bF_{2,k} \in \C^{N_t \times N_{RF}}$ with the constant modulus constraint 
     %\vsapce{-0.3cm}
     %\small
    \begin{equation}
    \label{eqCM}
        \bF_1\bF_{2,k} \in \cF_{N_t,N_{RF}},
       %\vsapce{-0.3cm}
    \end{equation}\normalsize 
    where $\cF_{N_t,N_{RF}}$ denotes the set of all matrices $\bX \in \C^{N_t \times N_{RF}}$ such that $|\bX(i,j)| = \frac{1}{\sqrt{N_t}}, \forall  i, j$. 
    The data stream $\bs_k$ in \eqref{eq6} satisfies $\E[\bs_{k}\bs^{H}_{k}] = \frac{1}{N_s}\bI_{N_s}$.
    Then, the precoders are normalized such that $\|\bF_1\bF_{2,k}\bW_k\|^2_{F} = N_s$,  leading to $        \E[\|\bF_1\bF_{2,k}\bW_k\bs_k\|^2_2] = 1$, for $1 \leq k \leq K.$

    A conventional approach to maximize the system's achievable rate is to  jointly design the digital precoders $\{\bW_k\}$, TTD precoders $\{\bF_{2,k}\}$, and PS precoder $\bF_1$ in \eqref{eq6}. 
    However, this approach is intractable due to the coupling between variables and non-convex constraints of TTD and PS precoders \cite{ayach2014,Alkhateeb2014,JZhang2016}. 
    Therefore, we adopt a sequential hybrid precoding method 
    \cite{Alkhateeb20152, TJ2020, TJ2022}. 
    Focusing on the analog domain, we propose to jointly design the TTD and PS precoders to compensate for the beam squint while fixing the digital precoders. 
    Subsequently, the digital precoders can be optimized by standard approaches such as \cite{dai2021,Gao2021} to maximize the achievable rate. \looseness=-1
    \subsection{Preliminaries}
    \label{secIIIA}

   We first describe the sign invariance property of the array gain in \eqref{eq5} and then identify the ideal analog precoder that completely compensates for the beam squint. The sign invariance property and the ideal analog precoder established in this subsection will then be used in Section~\ref{secIV} for jointly optimizing TTD and PS precoders.   
  \subsubsection{Sign Invariance of Array Gain}
  \label{SignInvariance}
   We define the combination of PS precoder $\bF_1$ and TTD precoder $\bF_{2,k}$ as 
   %\vsapce{-0.4cm}
   \begin{equation}
    \label{eqFk}
       \bF_k = \bF_1\bF_{2,k}, \text{ for } 1 \leq k \leq K, 
    %\vsapce{-0.3cm}
   \end{equation}
    where the $l$th column of $\bF_k$ is $\bff^{(l)}_{k} = \frac{1}{\sqrt{N_t}}\bX_le^{-2\pi f_k \bt_l}$, for $1 \leq l \leq N_{RF}$. The array gain associated with $\bff^{(l)}_{k}$ is then given, based on \eqref{eq5}, by     
   %%
   %\vsapce{-0.4cm}
    \begin{equation}
    %\begin{multline}
    \label{eq7}
      \d4 g(\bff^{(l)}_{k},\psi_{k,l}) = \frac{1}{N_t}\bigg| \sum\limits_{m=1}^{M}\sum\limits_{n=1}^{N} e^{j \pi \zeta_{k} \gamma^{(l)}_{n,m}} e^{j \pi  x^{(l)}_{n,m}} e^{-j\pi \zeta_k  \vartheta^{(l)}_{m}}\bigg|,\!\! %\hspace{-0.2cm}
  %\vsapce{-0.3cm}
    %\end{multline}
    \end{equation} 
    where 
    \small$\gamma^{(l)}_{n,m} = ((m-1)N+n-1)\psi_{l}$\normalsize~ and \small$\vartheta^{(l)}_{m} = 2f_ct^{(l)}_m \in [0,\vartheta_{\max}]$\normalsize~ with \small$\vartheta_{\max} = 2f_c t_{\max}$\normalsize. 
    Note that in \eqref{eq7}, $e^{j \pi \zeta_{k} \gamma^{(l)}_{n,m}}$ is frequency-dependent while $e^{j \pi  x^{(l)}_{n,m}}$ is frequency-independent,   and $e^{-j\pi \zeta_k  \vartheta^{(l)}_{m}}$ depends on both time and frequency. 
    The beam squint compensation capability of the $m$th TTD on the $l$th RF chain is restricted because the phase rotation $-\pi \zeta_k  \vartheta^{(l)}_{m}$ is within the interval $[-\pi \zeta_k  \vartheta_{\max},0]$.\looseness=-1 
    \begin{remark} (Sign Invariance Property)
    \label{rmk1}
    The $g(\bff^{(l)}_{k},\psi_{k,l})$ in \eqref{eq7} is invariant to the multiplication of negative signs to $\gamma_{n,m}^{(l)}$, $x_{n,m}^{(l)}$, and $\vartheta^{(l)}_{m}$. 
    To be specific, given $\psi_{l} \geq 0$ (i.e., $\gamma^{(l)}_{n,m} \geq 0$, $\forall m$, $n$), we denote $\{ {x_{n,m}^{(l)}}^\star \}$ and $\{{\vartheta^{(l)}_{m}}^\star\}$ as the optimal values of $\{ x_{n,m}^{(l)}\}$ and $\{ \vartheta^{(l)}_{m}\}$, respectively, that maximize $g(\bff^{(l)}_{k},\psi_{k,l})$ in \eqref{eq7}. 
        Then, it is not difficult to observe that $\{-{x^{(l)}_{n,m}}^\star\}$ and $\{\vartheta_{\max} - {\vartheta^{(l)}_{m}}^\star\}$ also maximize $g(\bff^{(l)}_{k},-\psi_{k,l})$. 
        Hence, without loss of generality, in what follows, we assume that  
        %\vsapce{-0.5cm}
        \begin{equation}
        \label{eqSIP}
            \psi_{l} \geq 0, \text{ for } 1 \leq l \leq N_{RF}.
            %\vsapce{-0.5cm}
        \end{equation}
         \noindent The sign invariance property will be found to be useful when deriving the solution to our optimization problem for joint time delay and phase shift precoding in Section~\ref{secIV}.   
    \end{remark} 
  \subsubsection{Ideal Analog Precoder}
  An ideal analog precoder can produce arbitrary frequency-dependent phase rotation values to completely mitigate the beam squint effect, which is hereafter denoted as $\widetilde{\bF}^{\star}_k \in \cF_{N_t \times N_{RF}}$, for $1 \leq k \leq K$. 
  The purpose of invoking the ideal analog precoder is to provide a reference design for the proposed joint TTD and PS precoding method in Section~\ref{secIV}. 
  Denoting $\widetilde{\bff}^{(l)}_k$ as the $l$th column of $\widetilde{\bF}^{\star}_k$, the array gain obtained by $\widetilde{\bff}^{(l)}_k$ is $g(\widetilde{\bff}^{(l)}_k,\psi_{k,l}) = |\bv^H_{k,l}\widetilde{\bff}^{(l)}_k| = 1$, for $1 \leq l \leq N_{RF}$,  
  resulting in $\widetilde{\bF}^{\star}_k = [\bv_{k,1}~\bv_{k,2}~ \dots~\bv_{k,N_{RF}}]$. Thus, the $((m-1)N\+n)$th row and $l$th column entry of $\widetilde{\bF}^{\star}_k$ is \looseness=-1
    %\vsapce{-0.3cm}
    \begin{equation}
        \label{eqIdealAP}
        \widetilde{\bF}^{\star}_k((m-1)N+n,l) =\frac{1}{\sqrt{N_t}} e^{-j\pi\zeta_k\gamma^{(l)}_{n,m}}, \forall k, l,m, n.
    %\vsapce{-0.3cm}
    \end{equation} 
The ideal analog precoder $\widetilde{\bF}^{\star}_k$ in \eqref{eqIdealAP} is achievable with $\bF_1\bF_{2,k}$ when each transmits antenna is equipped with a dedicated TTD, i.e., $M = N_t$ and $N =1$. In this case, the $l$th column of $\bF_1\bF_{2,k}$ is matched exactly to the array response vector of the $k$th subcarrier on the $l$th path, $1 \leq l \leq N_{RF}$ and $1 \leq k \leq K$. 
For instance, the PS and TTD values are designed as $x^{(l)}_{1,m} = 0$ and $t^{(l)}_m = \frac {(m-1)\psi_l}{2f_c}$, respectively, $1 \leq m \leq N_t$ and $ 1 \leq l \leq N_{RF}$.
However, this design is not practically motivated because it requires $N^2_t$ TTDs, resulting in a high hardware complexity and huge power consumption. 
In Section~\ref{secIV}, we address the latter issue by proposing a method to minimize the power consumption of the analog precoder given an array gain performance requirement.  
\subsection{Achievable Rate Performance}  
\label{secIIIB}
    {Given the analog precoders $\{\bF_k\}_{k=1}^{K}$ in \eqref{eqFk}, the achievable rate (averaged over the subcarriers) of the channel in \eqref{eqHk} is
    \begin{equation}\label{eqRate}
    R = \frac{1}{K} \sumK \log_2 \det  \Big( \bI_{N_s}  + \frac{\rho}{N_s} \bH_k\bF_k\bW_k\bW^H_k\bF^H_{k}\bH^H_k \Big).
\end{equation}To quantify the impact of the ideal analog precoders $\{\widetilde{\bF}^{\star}_k\}$ on the achievable rate $R$ in \eqref{eqRate}, we define the singular value decomposition (SVD) of $\bH_k$ as $\bH_k = \bU_k\bSig_k\bV^H_k$, where $\bU_k \in \C^{N_r \times N_{RF}}$ satisfying $\bU^H_k\bU_k = \bI_{N_{RF}}$, $\bV_k \in \C^{N_t \times N_{RF}}$ satisfying $\bV^H_k\bV_k = \bI_{N_{RF}}$, and $\bSig_k \in \R^{N_{RF} \times N_{RF}}$ is a diagonal matrix of non-zero singular values arranged in descending order.
It is well understood that the optimal fully-digital precoders maximizing $R$ in \eqref{eqRate} are $\{\bV_k\}$ \cite{Hadi2016}. 
Consequently, the precoders $\{\bF_k\}$ and $\{\bW_k\}$ that achieve this maximal rate can be designed by solving $\bF_k \bW_k = \bV_k$, for $k=1,\dots, K$.} 
   \begin{figure*}[htb]
        \centering
        \subfloat[]{%
        \includegraphics[scale=.6,trim=0.0cm 0cm 0.0cm 0.2cm, clip]{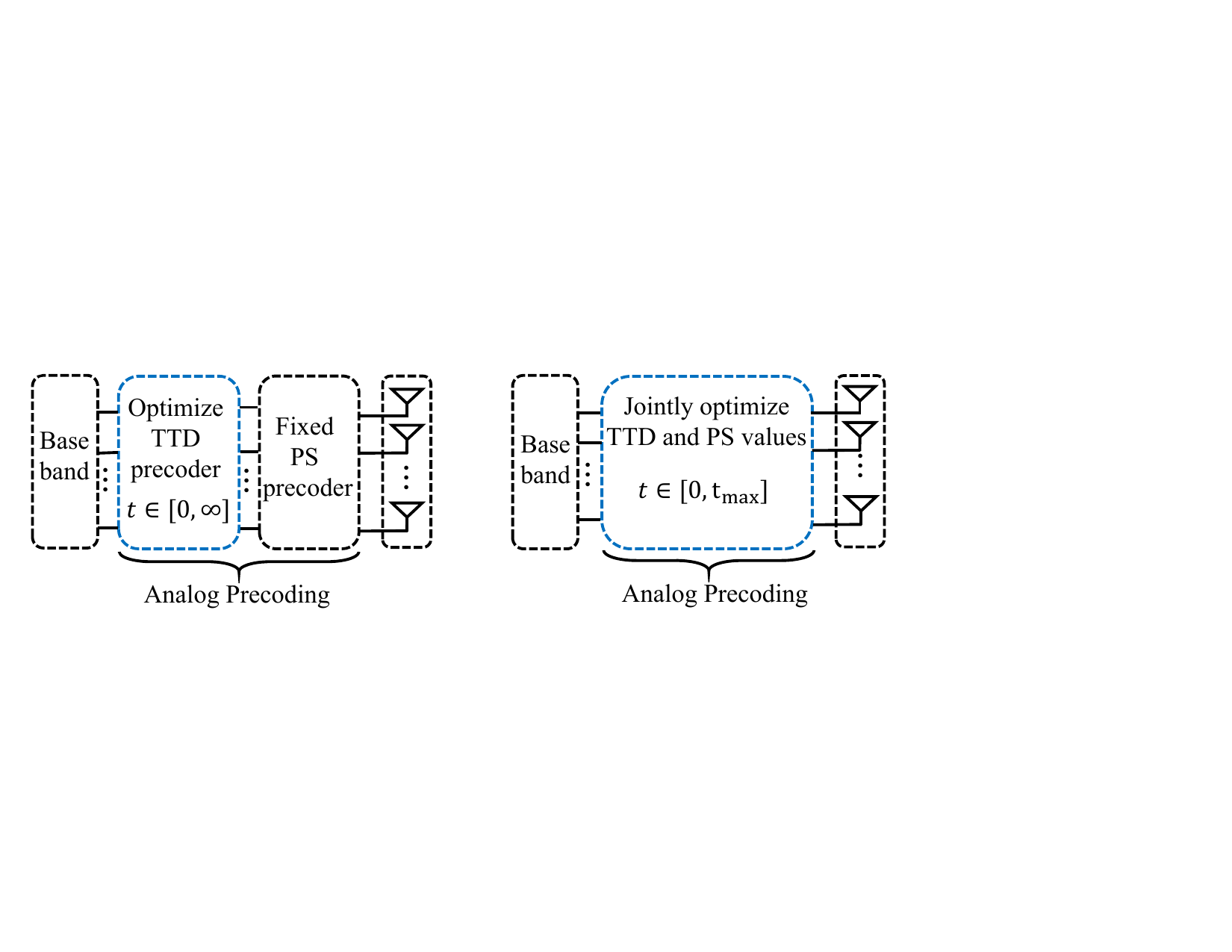}%
        \label{fig0a}%
        }%
        \hspace{5mm}
        \centering
        \subfloat[]{%
        \includegraphics[scale=.6,trim=0.0cm 0cm 0.0cm 0.2cm, clip]{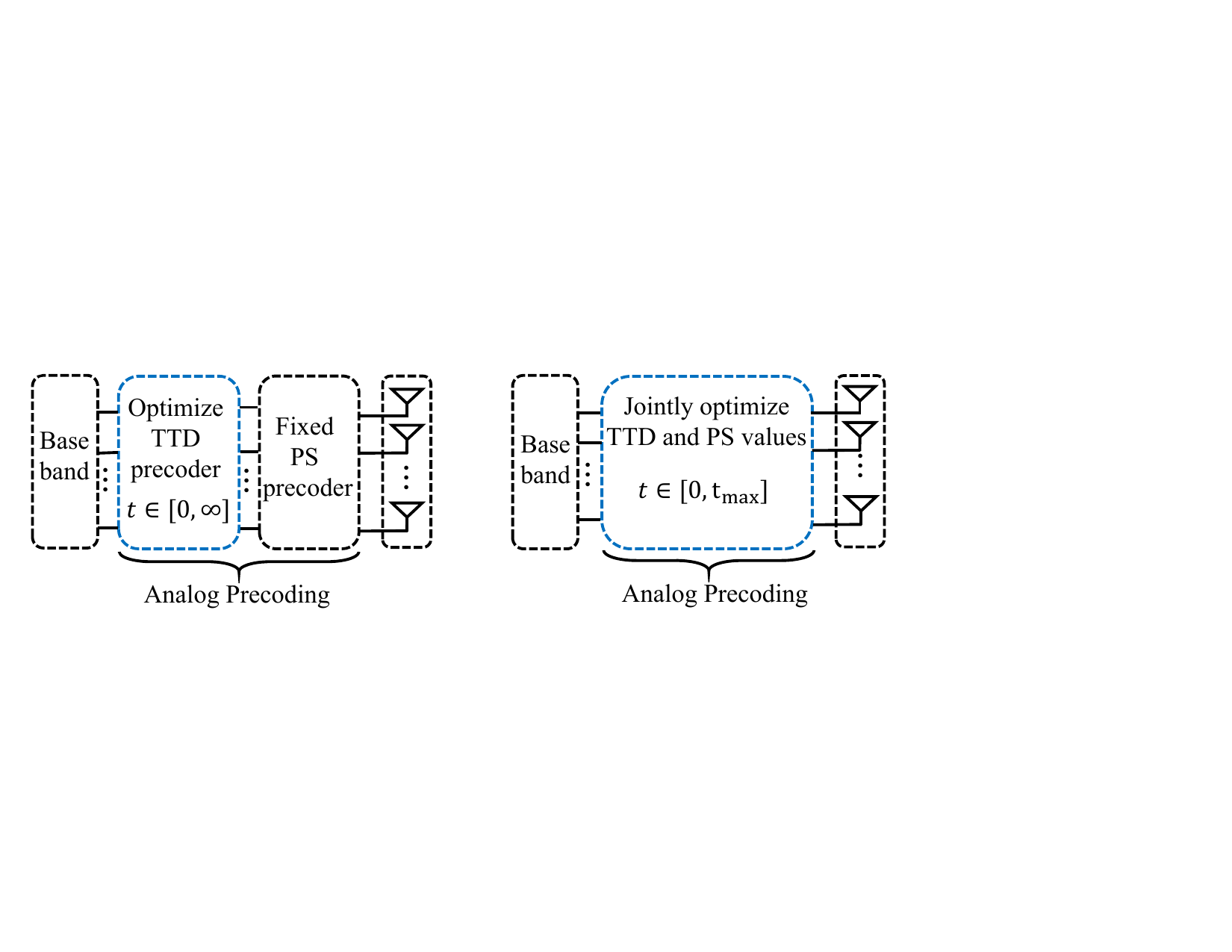}%
        \label{fig0b}%
        }%
       \hspace{5mm}
       \centering
        \subfloat[]{%
    \includegraphics[scale=.2,trim=0cm 0cm 0.0cm 0.0cm, clip]{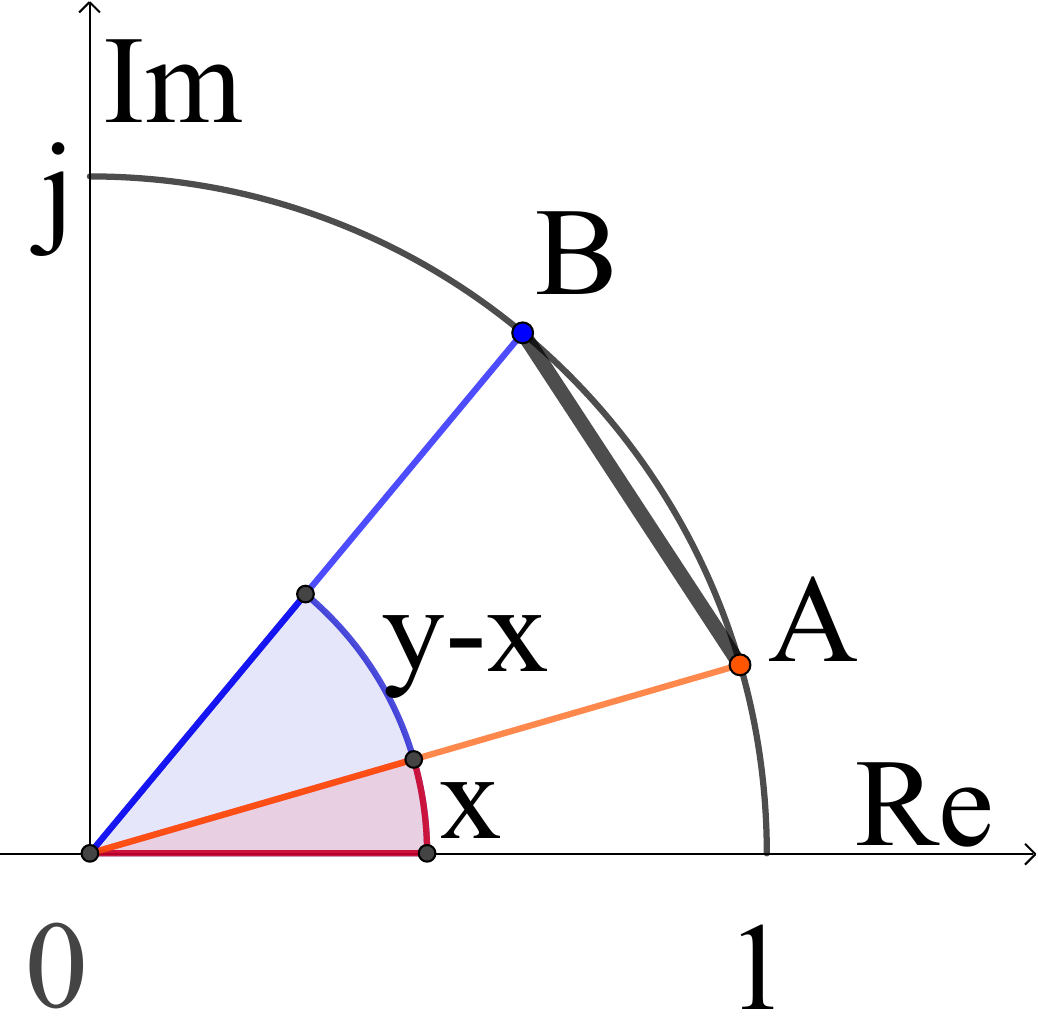}%  
    \label{fig4}%
        }%
    %%\vsapce{-0.5cm}    
     \caption{(a) Conceptual illustration of the prior methods for beam squint compensation \cite{dai2021,Gao2021,Matthaiou2021} . (b) The proposed method of this work. (c) Conceptual illustration showing the relationship between the distance of two points on the unit modulus circle (the segment AB) and their relative phase difference (the arc corresponding to $y-x$) }
    %%\vsapce{-0.8cm}
    \end{figure*}
{In subsequent analysis, we confirm the existence of a digital precoder $\bW^{\star}_k$ such that $\widetilde{\bF}_k^{\star} \bW_k^{\star} = \bV_k$, for $k=1,\dots, K$.
From the channel model in \eqref{eqHk}, we have $\bU_k\bSig_k\bV^H_k = \bG_k\bLam_k\widetilde{\bF}^{\star H}_k$, where $\bG_k = [\bu_{k,1} \dots \bu_{k,L}] \in \C^{N_r \times L}$  and $\bLam_k = \sqrt{\frac{N_tN_r}{L}} \diag(\alpha_{k,1}e^{-j2\pi\tau_lf_k},\dots,\alpha_{k,L}e^{-j2\pi\tau_lf_k})$.
It can be verified that $\bV_k = \widetilde{\bF}^{\star}_k\bLam_k^H\bG_k^H\bU_k(\bSig_k^T)^{-1}$ due to the assumption $L = N_{RF}$. 
It is feasible to match the performance of the optimal fully-digital precoder $\bV_k$ using the idea analog precoder $\widetilde{\bF}^{\star}_k$ combined with the digital precoder $\bW_k = \bLam_k^H\bG_k^H\bU_k(\bSig_k^T)^{-1}$. 
Therefore, we attempt in Section~\ref{secIV} to design the TTD precoder $\bF_{2,k}$ and PS precoder $\bF_1$ in order to best approximate the ideal analog precoder $\widetilde{\bF}^{\star}_k$, for $k=1,\dots, K$, under the per-TTD time delay constraint.}
    \section{Joint Delay and Phase Precoding  Under TTD Constraints} \label{secIV}
     We propose a novel method to design TTD and PS precoders, which is illustrated in Fig.~\ref{fig0b}. 
     In contrary with the prior works \cite{dai2021,Gao2021,Matthaiou2021} that optimize TTD values while fixing the PS values (Fig.~\ref{fig0a}), we jointly optimize both TTD and PS values (Fig.~\ref{fig0b}).
     Instead of assuming the impractical TTD that produces any time delay value $t$ (i.e., $0 \leq t \leq \infty$, Fig.~\ref{fig0a}), a finite interval  time delay value (i.e.,  $0 \leq t \leq  t_{\max}$, Fig.~\ref{fig0b}) is taken into account in our approach.  
    \subsection{Problem Formulation}
     \label{secIVA}
     Ideally speaking, we wish to find $\{\bt_l\}^{N_{RF}}_{l=1}$ and $\bF_1$  satisfying $\bF_1\bF_{2,k} = \widetilde{\bF}^{\star}_k$, $\forall k$. 
     However, given fixed $l$, $m$, and $n$, solving $K$-coupled matrix equations is an ill-posed problem. 
     To overcome this, we approach to formulate a problem that optimizes $\bF_1$ and $\{\bt_l\}^{N_{RF}}_{l=1}$ by minimizing the difference between $\bF_1\bF_{2,k}$ and $\widetilde{\bF}^{\star}_{k}$:  
     %\vsapce{-0.3cm}
     %\small
       \begin{subequations}
        \label{opt_prob}
        \beq
        \d4\d4\min_{\bF_1, \{ \bF_{2,k}\}_{k=1}^{K}} \d4 && \d4  \frac{1}{K}\sum_{k=1}^{K} \left\|\widetilde{\bF}^{\star}_k - \bF_1\bF_{2,k}\right\|^2_F \label{obj1}\\      
        \d4\d4\d4\d4\text{subject to}  
        \d4 && \d4 0 \leq t^{(l)}_{m} \leq t_{\max},\forall l,m,  \label{opt_proba}\\
        \d4 && \d4 {|\bF_1(i,j)|  \in  \left\{ \frac{1}{\sqrt{N_t}}, 0  \right\}, \forall i,j,} \label{opt_probb}\\
        \d4&&\d4 {|\bF_{2,k}(p,q)| \in  \{0,1\},\forall p,q,}\label{opt_probc} \\  
        \d4&& \d4{\bF_1\bF_{2,k}  \in  \cF_{N_t, N_{RF}},\forall k,}\label{opt_probd}
        \eeq
   \end{subequations}\normalsize 
   where the constraint in \eqref{opt_proba} indicates the restricted range of the time delay values per-TTD device, the constraint in \eqref{opt_probb} is due to the definition of $\bF_1$ in \eqref{eqF1}, the constraint in \eqref{opt_probc} is due to the definition of $\bF_{2,k}$ in \eqref{eqF2k}, and the constraint in \eqref{opt_probd} describes the constant modulus property of the analog precoder in \eqref{eqCM}.   
    The constraints in \eqref{opt_probb}-\eqref{opt_probd} in conjunction with the coupling between $\bF_1$ and $\bF_{2,k}$ in \eqref{obj1} make the problem difficult to solve. 
      Besides, \eqref{opt_prob} can be viewed as a matrix factorization problem with non-convex constraints, which has been studied in the context of hybrid analog-digital precoding \cite{ayach2014,Hadi2016,Zhang2018,Sohrabi2017}. 
     A common approach was applying block coordinate descent (BCD) and relaxing the constraints to deal with the non-convexity \cite{ayach2014,Hadi2016,Zhang2018,Sohrabi2017}. 
    Unlike the prior works, the original non-convex problem in \eqref{opt_prob} is approached, in this work, as an approximated convex problem. 
    \begin{remark}
    { We acknowledge the resemblance of the problem in \eqref{opt_prob} to the one studied in \cite{Ratnam2022}, which explores a generic beamforming behavior for fast frequency multiplexing. 
   The problem in [43, Eqn. (3)] can be viewed as a regularized variant of the problem in \eqref{opt_prob}. 
   Unlike the iterative method in \cite{Ratnam2022}, our work specifically targets the challenges of beam squint compensation, introducing a novel, one-shot joint design for PS and TTD values.
    This advancement allows us to delineate the system parameters, including $N_t, M, B,$ and $t_{\max}$ for effective beam squint compensation, which will be addressed in the later part of this section, a perspective that remains unexplored in the prior work \cite{Ratnam2022}.}
    \end{remark}
     Based on the structure of $\bF_1$ in \eqref{eqF1} and the block diagonal structure of $\{\bF_{2,k}\}_{k=1}^{K}$ in \eqref{eqF2k}, the objective function in \eqref{obj1} can be rewritten as
%    %\vsapce{-0.2cm}
    \begin{equation}
    \label{eqObj}  
     \frac{1}{K}\frac{1}{N_t}\sum_{k=1}^{K} \sum_{l=1}^{N_{RF}} \sum_{m = 1}^{M} \sum_{n=1}^{N} \Big|e^{-j\pi \zeta_k \gamma^{(l)}_{n,m}} - e^{j \pi x^{(l)}_{n,m}} e^{-j\pi \zeta_k \vartheta^{(l)}_{m} }\Big|^2.  
%    %\vsapce{-0.2cm}
    \end{equation}
    It is still difficult to deal with the objective function in \eqref{eqObj} due to the unit modulus constraint. 
    %%%
    To streamline the optimization process, we introduce a lemma demonstrating the equivalence of optimization on the unit circle to optimization in the phase domain.
     \begin{lemma}%%
     \label{lm1}
      For $x \in \R$ and $y \in \R$, the following equality holds 
      $\argmin\limits_{y \text{:} \mod(y,\pi) \neq x} |e^{jx} -e^{jy}| = \argmin\limits_{y \text{:} \mod(y,\pi) \neq x}|x-y|$,
         where $\mod(y,\pi)$ is $y$ modulo $\pi$.  
    %\vsapce{-0.1cm}
    \end{lemma}
    \begin{proof}
    See Appendix \ref{AppendixB}. 
    %\vsapce{-0.3cm}
    \end{proof}
   Fig.~\ref{fig4} graphically visualizes the equivalence in Lemma~\ref{lm1}. We let points A and B represent $e^{jx}$ and $e^{jy}$, respectively. It is not difficult to observe from Fig.~\ref{fig4} that minimizing the length of the segment AB with respect to B is equivalent to minimizing $|y-x|$ with respect to $y$, in which the latter is a convex optimization problem.
     In what follows, Lemma~\ref{lm1} is exploited to convert the non-convex problem in \eqref{opt_prob} into an approximated convex problem. 
      {
      Incorporating Lemma 1 into \eqref{eqObj} converts the problem in \eqref{opt_prob} to an approximation \cite{Qua2022}, \cite{Ratnam2022}:}
    %\small
    %\vsapce{-0.3cm}
     \begin{subequations}
             \label{opt_pro_2}    
        \beq
         \d4\min_{\{x^{(l)}_{n,m}\}, \{\vartheta^{(l)}_{m}\}} &&\d4\d4\scalemath{0.9}{\frac{1}{K}\!\sum\limits_{k=1}^{K}\!\sum\limits_{l=1}^{N_{RF}}\! \sum\limits_{m=1}^{M}\!\sum\limits_{n=1}^{N}\left|
          x^{(l)}_{n,m}  \!-\!  \zeta_k\vartheta^{(l)}_{m} \!+\! \zeta_k\gamma^{(l)}_{n,m}\right|^2}\!\!\!,          \label{opt_pro_2a}\\   
        \text{subject to }   &&\d4\d4 0 \leq \vartheta^{(l)}_{m}  \leq \vartheta_{\max}, ~\forall  l, m.         \label{opt_pro_2b}
        \eeq 
    \end{subequations}
     Next, we turn \eqref{opt_pro_2} to a composite matrix optimization problem. 
      The PS and TTD variables in \eqref{opt_pro_2} can be collected into a matrix \small$\bA_l = [\ba^{(l)}_{1}~\dots~\ba^{(l)}_{M}] \in \R^{(N+1)\times M}$\normalsize, where \small$\ba^{(l)}_{m} = [(\bx^{(l)}_m) ^T~\vartheta^{(l)}_{m}]^T \in \R^{N+1}$\normalsize.
      Containing the analog counterpart in $\widetilde{\bF}^{\star}_k$ in a matrix  $\bB^{(l)}_k \in \R^{N \times M},$ where the $n$th row and $m$th column entry of $\bB^{(l)}_k$ is $B^{(l)}_k(n,m) = -\zeta_k \gamma^{(l)}_{n,m}$, $\forall k,l,n,m$, the problem \eqref{opt_pro_2} becomes 
     %\vsapce{-0.3cm}
    %\small
    \begin{subequations}
    \label{opt_pro3}
        \beq
        \min_{\{ \bA_l \}^{N_{RF}}_{l=1}} && \frac{1}{K}\sum\limits_{k=1}^{K}\sum\limits_{l=1}^{N_{RF}} \left\|\bC_k\bA_{l} - \bB^{(l)}_k\right\|^2_F,\\  \text{ subject to }&&  
        \mathbf{0}^T_{M} \leq \be^{T}_{N+1}\bA_{l} \leq \vartheta_{\max}\mathbf{1}^T_{M}, ~\forall l,
     \eeq 
    \end{subequations}where \small$\bC_k = [\bI_{N} ~ -\zeta_k \mathbf{1}_{N}] \in \R^{N \times (N+1)}$\normalsize~ and $\be_{N+1} \in \R^{N+1}$ is the $(N+1)$th column of the {identity} matrix $\bI_{N+1}$. The vector inequalities in \eqref{opt_pro3} is the entry-wise inequalities. 
    
By introducing $\bC$ $=$ $\frac{1}{K}\sum_{k=1}^{K}\bC^T_k\bC_k$ $\in$ $\R^{(N+1)\times(N+1)}$, $\bD_l$ $=$ $\frac{1}{K}\sum_{k=1}^{K}\bC^T_k\bB^{(l)}_k$ $\in$ $\R^{(N+1)\times M}$\normalsize, and $c^{(l)}_B$ $=$ $\frac{1}{K}\sumK\|\bB^{(l)}_k\|^2_F$\normalsize, the objective function in 
     \eqref{opt_pro3} can be rewritten as 
$\frac{1}{K}\sum_{k=1}^{K}\sum_{l=1}^{N_{RF}} \Big\|\bC_k\bA_{l} - \bB^{(l)}_k\Big\|^2_F$
 % = \sum_{l=1}^{N_{RF}} \tr(\bC\bA_l\bA^T_l)-2\tr(\bA^T_l\bD_l) + c^{(l)}_B, \nonumber \\               
 $=$ $\sum_{l=1}^{N_{RF}}\sumM \Big(\big(\ba^{(l)}_m \big)^T \bC \ba^{(l)}_m - 2 \big(\bd^{(l)}_m\big)^T\ba^{(l)}_m \Big)+ \sum_{l=1}^{N_{RF}} c^{(l)}_B$\normalsize,
where the $\bd^{(l)}_m$ is the $m$th column of $\bD_l$. Hence, the problem \eqref{opt_pro3} is equivalently   
%    %\vsapce{-0.5cm}
  \begin{subequations}
        \label{eq:lcqp}   
        \beq
        \min_{\ba^{(l)}_m} &&(\ba^{(l)}_m)^T\bC\ba^{(l)}_m -2(\bd^{(l)}_m)^T\ba^{(l)}_m, \\
        \text{ subject to } 
         && 0 \leq \be_{N+1}^{T}\ba^{(l)}_m \leq \vartheta_{\max}, \forall l,m. 
        \eeq
    \end{subequations}
    %%\vsapce{-0.3cm}
    %%
    The problem \eqref{eq:lcqp} can be viewed as a decomposition of \eqref{opt_pro3} into $MN_{RF}$ independent problems. 
    This decomposition reveals an alignment with the TTD and PS precoding architecture in Fig.~\ref{fig2}, where each RF chain feeds $M$ TTDs and each TTD feeds $N$ PSs.  
    Thus, the problem in \eqref{eq:lcqp} is equivalent to optimizing the TTD and PS values of each RF chain branch independently.
    \subsection{Optimal Closed-form Solution}
    \label{secIVB}
    In this subsection, we find the optimal closed-form solution of the convex problem \eqref{eq:lcqp}.
    We start by deriving the expression of $\bC$ in \eqref{eq:lcqp}. To this end, we first define the constant $\Gamma = N+\eta$, where $\eta = \frac{NB^2}{f^2_c}\frac{(K^2-1)}{12K^2}$. For $\bC_k$ in \eqref{opt_pro3}, we have $\bC^T_k\bC_k =  \begin{bsmallmatrix}
     \bI_{N} & -\zeta_k \mathbf{1}_{N}  \\-\zeta_k \mathbf{1}^T_{N} & N\zeta^2_k
    \end{bsmallmatrix}$, where $\zeta_k$ is defined in \eqref{eqSDc}. 
    After some algebraic manipulations, it is readily verified that 
   %\vsapce{-0.3cm}
    %\small
    \begin{equation}
    \label{eqZeta}
    \sumK \zeta_k  = K \text{ and } \sumK \zeta^2_k  =K\Big(1+\frac{\eta}{N}\Big).
    %\vsapce{-0.3cm}
    \end{equation}\normalsize 
    Then, $~~\bC~~ = ~~~~ \frac{1}{K} \sumK \bC^T_k\bC_k$\normalsize~ is simplified, based on \eqref{eqZeta}, to 
    $\bC =  \begin{bsmallmatrix} 
        \bI_{N} & -\mathbf{1}_{N} \\
        -\mathbf{1}^T_{N} & \Gamma \end{bsmallmatrix}$\normalsize,  
    which yields \looseness=-1   
    %\vsapce{-0.5cm}
    \begin{equation}
        \label{eqCe}
         \be^T_{N+1}\bC^{-1}\be_{N+1} = \frac{1}{\eta}. 
    %\vsapce{-0.3cm}
    \end{equation}
    Defining $\bb^{(l)}_{k,m} \= [-\zeta_k\gamma^{(l)}_{1,m}~\dots~-\zeta_k\gamma^{(l)}_{N,m}]^T$ as the $m$th column of $\bB^{(l)}_k$ in \eqref{opt_pro3}, we attain \looseness=-1
    %\vsapce{-0.3cm}
    %\small
    \begin{equation}
        \label{eqdlm}
        \bd^{(l)}_m  = \begin{bmatrix}
      \frac{1}{K}\sumK \bb^{(l)}_{k,m} \\ \frac{1}{K}\sumK -\zeta_k \mathbf{1}^T_{N}\bb^{(l)}_{k,m}
     \end{bmatrix} \in \R^{(N+1)\times 1}.
    %\vsapce{-0.3cm}
    \end{equation}\normalsize
    Using \eqref{eqdlm}, it is readily verified that $\be_{N+1}^T\bC^{-1}\bd^{(l)}_{m} =\frac{1}{\eta} \frac{1}{K}\sumK \Big((1-\zeta_k)\sumN \bb^{(l)}_{k,m}(n,1) \Big)$. Hence,
    \begin{equation}\label{eqCde}
     \be_{N+1}^T\bC^{-1}\bd^{(l)}_{m}=\frac{(2m-1)N-1}{2}\psi_{l}, 
    %\vsapce{-0.3cm}
    \end{equation}\normalsize
    where the $\bb^{(l)}_{k,m}(n,1)$ is the $n$th entry of $\bb^{(l)}_{k,m}$, $1 \leq n \leq N$, and the last equality in \eqref{eqCde} follows from the facts that $\bb^{(l)}_{k,m}(n,1)$ $=$ $-\zeta_k \gamma^{(l)}_{n,m}$ and $\gamma^{(l)}_{n,m}$ $=$ $((m-1)N+n-1)\psi_l$. 
    Based on \eqref{eqCe} and \eqref{eqCde}, the optimal solution of the problem in \eqref{eq:lcqp} is summarized below.\looseness=-1  
    \begin{theorem}
    \label{theorem1}
    The optimal solution ${\ba^{(l)}_m}^\star = [({\bx^{(l)}_m}^\star)^T,{\vartheta^{(l)}_m}^\star]^T$ to \eqref{eq:lcqp} is given by 
    %%
    %\small
    \begin{subnumcases}{\label{eqPS} {x^{(l)}_{n,m}}^{\footnotesize \star}=}
    \scalemath{0.9}{\frac{N-2n+1}{2}\psi_{l}},&  \text{if } $0 \leq \psi_{l} \leq \frac{4f_c t_{\max}}{(2m-1)N-1} $, \label{eqPSa}\\ 
     \scalemath{0.9}{\vartheta_{\max} - \gamma^{(l)}_{n,m}},& $\text{ otherwise}$, \label{eqPSb}
    \end{subnumcases}
    for $1 \leq l \leq N_{RF}$, $1 \leq n \leq N$, $1 \leq m \leq M$,
    and ${\vartheta^{(l)}_m}^\star = 2f_c {t^{(l)}_m}^\star$, where the ${t^{(l)}_m}^\star$ is   
    %\small
    \begin{subnumcases}{\label{eqTTD}
    {t^{(l)}_m}^\star = } \scalemath{0.9}{\frac{(2m-1)N-1}{4f_c}\psi_{l}}, \d4 &$\text{ if } 0 \leq \psi_{l} \leq \frac{4f_c t_{\max}}{(2m-1)N-1}$,  \label{eqTTDa} \\  
    t_{\max},  \d4& $\text{ otherwise}$. \label{eqTTDb}   \end{subnumcases}
    \end{theorem}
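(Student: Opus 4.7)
The problem \eqref{eq:lcqp} is a strictly convex quadratic program: the Hessian $\bC$ from \eqref{eqCinv} is positive definite (an explicit inverse is exhibited there), and the feasible set is the hyperslab $0 \leq \be_{N+1}^T \ba^{(l)}_m \leq \vartheta_{\max}$. Because the decomposition leading to \eqref{eq:lcqp} is already independent across $(l,m)$, I would fix one such pair throughout, and by the sign invariance of Remark~\ref{rmk1} I may further assume $\psi_l \geq 0$ without loss of generality. The strategy is then the standard KKT dichotomy for a convex QP with a two-sided linear constraint: either the unconstrained minimizer $\bC^{-1}\bd^{(l)}_m$ is feasible and is the optimum, or exactly one of the two bounds on $\be_{N+1}^T\ba^{(l)}_m$ is active, in which case that coordinate is fixed and the remaining coordinates are re-optimized.

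My first step is to read off the $(N{+}1)$th coordinate of the unconstrained minimizer from \eqref{eqCde2}, which gives $\be_{N+1}^T \bC^{-1}\bd^{(l)}_m = \frac{(2m-1)N-1}{2}\psi_l$. Since $\vartheta_{\max} = 2 f_c t_{\max}$, this lies in $[0,\vartheta_{\max}]$ precisely when $\psi_l \leq \frac{4 f_c t_{\max}}{(2m-1)N-1}$, which is the regime of \eqref{eqTTDa}; in that regime the constrained and unconstrained optima coincide and dividing by $2f_c$ produces \eqref{eqTTDa}. For the first $N$ coordinates I would carry out the block product $\bC^{-1}\bd^{(l)}_m$ using \eqref{eqCinv}, \eqref{eqdlm}, and the moments \eqref{eqZeta}, which yields $[\bC^{-1}\bd^{(l)}_m]_n = -\gamma^{(l)}_{n,m} + \frac{((2m-1)N-1)\psi_l}{2}$; substituting $\gamma^{(l)}_{n,m} = ((m-1)N+n-1)\psi_l$ then collapses this to $\frac{N-2n+1}{2}\psi_l$, matching \eqref{eqPSa}.

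When instead $\psi_l > \frac{4 f_c t_{\max}}{(2m-1)N-1}$, the unconstrained $\vartheta$-coordinate overshoots $\vartheta_{\max}$, so by strict convexity together with complementary slackness the upper bound must bind, giving ${\vartheta^{(l)}_m}^\star = \vartheta_{\max}$, i.e., \eqref{eqTTDb}. Substituting this fixed value into the objective and exploiting the block structure of $\bC$, the part depending on $\bx^{(l)}_m$ becomes separable in $n$ and, up to a constant, reads $(x^{(l)}_{n,m})^2 + 2(\gamma^{(l)}_{n,m} - \vartheta_{\max})x^{(l)}_{n,m}$; zeroing the derivative yields \eqref{eqPSb}. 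The main obstacle is the algebraic simplification of the first $N$ coordinates of $\bC^{-1}\bd^{(l)}_m$: although \eqref{eqCde2} already handles the $(N{+}1)$th coordinate, the remaining rows require a careful block multiplication using \eqref{eqCinv} and the identities $\frac{1}{K}\sum_k \zeta_k = 1$ and $\frac{1}{K}\sum_k \zeta_k^2 = 1 + \eta/N$ from \eqref{eqZeta}, after which the polynomial in $n$ and $m$ telescopes into the clean form $\frac{N-2n+1}{2}\psi_l$. A secondary sanity check, automatic under Remark~\ref{rmk1}, is that the lower bound $\vartheta^{(l)}_m \geq 0$ never binds when $\psi_l \geq 0$, since both $\frac{(2m-1)N-1}{2}\psi_l$ and $\vartheta_{\max}$ are nonnegative, so only the upper bound can switch on.
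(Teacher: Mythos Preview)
Your proof is correct and follows essentially the same KKT-based approach as the paper: compute the unconstrained minimizer $\bC^{-1}\bd^{(l)}_m$, use \eqref{eqCde2} and the sign-invariance assumption $\psi_l\geq 0$ to decide which (if any) bound binds, and then simplify the resulting expressions via \eqref{eqCinv}, \eqref{eqdlm}, and \eqref{eqZeta}. The only cosmetic difference is that in the active-constraint case the paper carries the explicit multiplier $\lambda_1 = 2\eta\!\left(\frac{(2m-1)N-1}{2}\psi_l-\vartheta_{\max}\right)$ through the formula $\bC^{-1}(\bd^{(l)}_m-\tfrac{1}{2}\lambda_1\be_{N+1})$, whereas you fix $\vartheta^{(l)}_m=\vartheta_{\max}$ and re-minimize over $\bx^{(l)}_m$ directly; both routes produce \eqref{eqPSb}.
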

    \begin{proof}
     See Appendix~\ref{AppendixTheorem1}. 
    %\vsapce{-0.3cm}
    \end{proof}
    \begin{remark}
    \label{rmk2}
    The solutions in \eqref{eqTTD} and \eqref{eqPS}  differentiate them from prior approaches to solving related problems. For instance, the PS values in \cite{dai2021} were not optimized and given by $x^{(l)}_{n,m} = -(n-1)\psi_{l}$, $\forall n$. 
    The time delay value of the $m$th TTD was $t^{(l)}_{m} = m\frac{N\psi_{l}}{2f_c}$ in \cite{dai2021}; as $m$ increases, the $t^{(l)}_m$ could be larger than $t_{\max}$, in which case such $t^{(l)}_m$ needs to be floored to the $t_{\max}$, resulting in performance deterioration.
     As will be discussed in {Section}~\ref{secV}, when all TTD values are smaller than $t_{\max}$, the approaches in \cite{dai2021,Gao2021} achieve the same array gain performance as the proposed approach, meaning that the designs in \cite{dai2021,Gao2021} is a special case of {Theorem}~\ref{theorem1}.
\end{remark}
  \begin{remark}
  \label{rmk3.1}
      The proposed approach takes advantage of the block-diagonal structure of PS and TTD precoders to decompose the MIMO precoding problem in \eqref{opt_prob} into multiple single-RF chain precoding design problems in \eqref{eq:lcqp}.
  For an array architecture that does not allow introducing a tractable block-diagonal structure of the precoders, a general subspace decomposition approach for the hybrid precoders design in \cite{ayach2014, Hadi2016} can be extended to alternatively optimize the digital precoder $\{\bW^{\star}_k\}$, PS precoder $\bF_1^\star$, and TTD precoder $\{\bF^{\star}_{2,k}\}$. 
    However, we note that satisfying the constraint in \eqref{opt_probd} without imposing some tractable structure into $\bF_1$ and $\{\bF_{2,k}\}$ is still challenging. 
    Investigation of ways of addressing general hybrid MIMO precoding architecture is subject to further research and not the focus of the present work.
\end{remark}
    In the following, the benefits of the proposed joint TTD and PS precoder optimization method are discussed, characterizing system parameters for the best beam squint compensation, which was unexplored by the prior works \cite{dai2021, Ratnam2022}. 
   The optimal condition in \eqref{eqPSa} and \eqref{eqTTDa} can be  rewritten as 
    \begin{equation}
    \label{eqcondi}
     \frac{(2m-1)N-1}{4f_c} \psi_l \leq t_{\max}.   
    \end{equation} 
    When \eqref{eqcondi} is satisfied, the beam squint is compensated effectively. 
    {In this case, the squinted beams are shifted to be aligned with spatial paths.} 
    However, the condition \eqref{eqcondi} may be violated, for example, when either $t_{\max}$ becomes small or $N$ becomes large (i.e., when $N_t$ tends to be large while fixing $M$, i.e., $N = \frac{N_t}{M}$).
    {This case results in beam misalignment at some subcarrier frequencies, which could cause array gain degradation.} 
    Based on the condition in \eqref{eqcondi}, we obtain selection criteria (rule of thumb) on the required number of transmit antennas $(N_t)$ and the value of maximum time delay $(t_{\max})$ for the best beam squint compensation as follows.
    \subsubsection{$N_t$ Selection Criterion}\label{SecIVB1} 
    Given the number of TTDs per RF chain $M$ and the $t_{\max}$ values determined by the employed TTD devices, choose $N_t$ such that 
    %\vsapce{-0.3cm}
    %\small
    \begin{equation}
         \label{eqNt}
         N_t \leq \frac{M}{2m-1} + \frac{4M}{(2m-1)}\frac{1}{\psi_{l}}f_c t_{\max}, \forall l, m,
    %\vsapce{-0.3cm}
    \end{equation}\normalsize  
    where \eqref{eqNt} is a result of substituting $N = \frac{N_t}{M}$ into \eqref{eqcondi}. The inequality in \eqref{eqNt} is rewritten as  
    %\vsapce{-0.3cm}
    %\small
   \begin{multline}
     N_t \leq \min_{l,m} \Big( \frac{M}{2m-1} + \frac{4M}{(2m-1)}\frac{1}{\psi_{l}}f_c t_{\max} \Big) \\ = \frac{M}{2M-1} + \frac{4M}{(2M-1)}\frac{1}{\max _l\psi_{l}}f_ct_{\max},\label{eq:criterion1c}
    \end{multline}
     where the $\min_{l,m}$ in \eqref{eq:criterion1c} is taken so that \eqref{eqNt} holds for all $MN_{RF}$ TTDs. The last equality in \eqref{eq:criterion1c} follows from substituting $m = M$ and $\psi_l$ by its maximum value $\max_{l}\psi_l$.
    \subsubsection{$t_{\max}$ Selection Criterion}\label{SecIVB2} 
    Equivalently, given the number of TTDs per RF chain $M$ and the number of transmit antennas $N_t$, the TTD value should be chosen to satisfy  
    %\vsapce{-0.3cm}
    %\small
    \begin{equation}
        \label{eqtmax}
         t_{\max} \geq \psi_{l}\frac{(2m-1)N_t-M}{4M}\frac{1}{f_c}, \forall l, m.
    %\vsapce{-0.3cm}
    \end{equation}\normalsize 
    The inequality in \eqref{eqtmax} can be rewritten as 
    %\vsapce{-0.3cm}
    %\small
     \begin{multline}
    t_{\max} \geq \max_{l,m}\Big(\psi_{l}\frac{(2m-1)N_t-M}{4M}\frac{1}{f_c}\Big) \\= (\max_{l} \psi_{l}) \frac{(2M-1)N_t-M}{4M} \frac{1}{f_c}.  \label{eq:criterion2c}
    \end{multline}
    For example, setting $f_c = 300$ GHz and $M=16$, the right-hande side (RHS) of \eqref{eq:criterion1c} increases linearly from $186$ to $743$ with respect to $t_{\max}$  with the slope being $6.1935 \times 10^{11}$ as $t_{\max}$ in \eqref{eq:criterion1c} grows from $300$ ps  to $1200$ ps.
    Similarly, the RHS of \eqref{eq:criterion2c} increases linearly from $412.5$ ps to $1135.8$ ps with the slope $1.6146\times 10^{-12}$ as $N_t$ in \eqref{eq:criterion2c} grows from $256$ to $720$.
     {\begin{remark}\label{rmk3.2}
          We note that since the receiver is less likely in angle $\max_l \psi_l$, one can relax $\max_l \psi_l$ in \eqref{eq:criterion1c} and \eqref{eq:criterion2c} to $\varrho\max_{l}\psi_l~(0< \varrho < 1)$ to increase the upper bound of $N_t$ to $\frac{M}{2M-1} + \frac{4M}{(2M-1)}\frac{1}{\varrho \max_{l}\psi_l}f_ct_{\max}$. 
          This allows us to bring a larger $N_t$, which readily improves the average achievable rate performance.
     \end{remark}}
\subsection{Array Gain Loss}
   Next, we characterize the array gain when \eqref{eq:criterion2c} is satisfied and array gain loss when it is not.
   First, when $t_{\max}$ meets the optimal condition in \eqref{eq:criterion2c}, we compute the array gain $g(\bff^{(l)}_{k},\psi_{k,l})$ defined in \eqref{eq7} using the optimal PS values in \eqref{eqPSa} and TTD values in \eqref{eqTTDa}, i.e., \sloppy $g(\bff^{(l)}_{k},\psi_{k,l})$ $=$ $\frac{1}{N_t}\Big| 
       \sum\limits_{m=1}^{M}  \sum\limits_{n=1}^{N}  e^{j \pi \zeta_{k} \gamma^{(l)}_{n,m}} e^{j \pi \frac{N-2n+1}{2}\psi_l} e^{-j\pi \zeta_k  \frac{(2m-1)N-1}{2}\psi_l}  \Big|$, yielding 
       \begin{equation}
           \label{eq:optimalGain}
           g(\bff^{(l)}_{k},\psi_{k,l}) =  \left| \frac{\sin(N\Delta_{k,l})}{N\sin(\Delta_{k,l})} \right|.
       \end{equation}
{
When the condition in \eqref{eq:criterion2c} is not satisfied, the array gain performance is deteriorated. 
This can be quantified in terms of the array gain loss at the $l$th RF chain $\Big|\frac{\sin(N\Delta_{k,l})}{N\sin(\Delta_{k,l})}\Big| - g(\bff_k^{(l)},\psi_{k,l})$ as shown in the following remark.
\begin{remark}\label{lmAGloss}
Suppose that the first $M' \leq M$ TTDs are given by \eqref{eqTTDa} and the last $(M-M')$ are given by \eqref{eqTTDb}, i.e., $\psi_l\frac{(2M'-1)N_t-M}{4M}\frac{1}{f_c} \leq t_{\max} \leq \psi_l\frac{(2(M'+1)-1)N_t-M}{4M}\frac{1}{f_c}$.
Then, the following holds,
\begin{equation}\label{eqAGlossBound}
    \d4\scalemath{0.85}{0 \leq \Big|\frac{\sin(N\Delta_{k,l})}{N\sin(\Delta_{k,l})}\Big| - g(\bff_k^{(l)},\psi_{k,l}) \leq \frac{M-M'}{M}\Big(\Big|\frac{\sin(N\Delta_{k,l})}{N\sin(\Delta_{k,l})}\Big| + 1\Big).}
\end{equation}
\end{remark}
 The proof of \eqref{eqAGlossBound} is relegated to Appendix~\ref{AppendixE}. 
 The bound in \eqref{eqAGlossBound} implies that the array gain loss becomes zero when $M'= M$, which occurs when $t_{\max}$ is sufficiently large to meet the optimal condition in \eqref{eq:criterion2c} and the TTD values $\{t^{(l) \star}_m\}_{m=1}^{M}$ are given by \eqref{eqTTDa}.}
{The array gain loss upper bound decreases by the factor of $\frac{1}{M} \Big( \Big| \frac{\sin(N\Delta_{k,l})}{N\sin(\Delta_{k,l})}\Big| + 1\Big)$ when $M'$ increases to $(M'+1)$ while satisfying $\psi_l\frac{(2(M'+1)-1)N_t-M}{4M} \frac{1}{f_c} \leq t_{\max} \leq \psi_l \frac{(2(M'+2)-1)N_t-M}{4M} \frac{1}{f_c}$.} 
{
The TTD-based hybrid precoding architecture in Fig.~\ref{fig2} utilizes $N_{RF}$ RF chains, $MN_{RF}$ TTDs, and $N_{RF}N_t$ PSs, which could increase hardware complexity substantially.
%%
%This challenge primarily arises from the large numbers of TTDs $(M)$, antennas $(N_t)$, and analog combiners $(N_t)$ needed to combine signals from all RF chains.
%%
%%
Reducing the hardware complexity of the architecture in Fig.~\ref{fig2} while maintaining the beam squint compensation performance is an important research problem and is subject to further study.}
In this connection, the system parameter selection criteria in \eqref{eq:criterion1c} and \eqref{eq:criterion2c} give an idea of the values of $N_t$ and $t_{\max}$ for the best beam squint compensation given a fixed number of TTDs per RF chain $M$.
In practice, it is also critical to install an appropriate value of $M$ to reduce the power consumption of the analog precoder, which is addressed in the next subsection.
\subsection{$M$ Selectrion Criterion}
    \label{secIVC}
    In this subsection, we further exploit the closed-form solution of the TTD and PS precoders in Theorem~\ref{theorem1} to characterize the minimum number of TTDs per RF chain $M$ given a predefined array gain performance while assuming that the constraint in \eqref{eq:criterion2c} is always satisfied and the number of transmit antennas $N_t$ is fixed.     
    % %%
    To this end, we formulate a problem that minimizes $M$ to guarantee a given array gain performance:  
    %\vsapce{-0.3cm}
    \begin{equation}
    \label{eqPowerGain}
        %\beq
        \min_{M} M 
        \text{ subject to }  N_t = MN,~  g(\bff^{(l)}_k,\psi_{k,l}) \geq g_0,\forall k,l, 
    %\vsapce{-0.3cm}
    \end{equation}  
    where the second constraint in \eqref{eqPowerGain} describes the array gain guarantee with the threshold $0 < g_0 < 1$.
    We note that the total power consumption of the analog precoder can be modeled as $P_{total} = N_{RF}MP_{TTD} + N_{RF}N_tP_{PS}$, where $P_{TTD}$ and $P_{PS}$ are the 
    power consumption of a TTD and a PS, respectively. Hence, given the fixed values of $N_t$, $P_{TTD}$, and $P_{PS}$, the objective of minimizing the power consumption (i.e., $\min_{M}P_{total}$) is equivalent to the objective in \eqref{eqPowerGain}. 

    Incorporating $g(\bff^{(l)}_{k},\psi_{k,l})$ in \eqref{eq:optimalGain} into the second constraint in \eqref{eqPowerGain} and substituting $\frac{N_t}{M}$ for $N$, the problem is converted to:
%\vsapce{-0.2cm}
\begin{equation}
\label{eqMt}
\min_{M}   M
 \text{ subject to }   \bigg|\frac{\sin\left(\frac{N_t}{M}\Delta_{k,l}\right)}{\frac{N_t}{M}\sin(\Delta_{k,l})} \bigg| \geq g_0,\forall k,l. 
%\vsapce{-0.2cm}
\end{equation} 
The problem in \eqref{eqMt} can be viewed as finding the minimum element in the feasible set \small$\cM$ $=$ $\left( \bigcap_{k=1,l=1}^{K,L} \cM_{k,l} \right) \cap \cN_t$\normalsize, where \small$\cM_{k,l} = \bigg\{M \in  \R \colon \Big|\frac{\sin(\frac{N_t}{M}\Delta_{k,l})}{\frac{N_t}{M}\sin(\Delta_{k,l})} \Big| \geq g_0 \bigg\}$\normalsize, for $1 \leq k \leq K$ and $1 \leq l \leq L$, and \small$\cN_t = \left\{m \in \N \colon m | N_t\right\}$\normalsize, where $m| N_t$ denotes that $m$ is a divisor of $N_t$.
Thus, solving the problem \eqref{eqMt} using the greedy search method requires the construction of $KL+1$ sets $\{\cM_{k,l}\}$ and $\cN_t$, which is demanding as the number of OFDM subcarriers grows. 
To cope with these difficulties, we propose to 
approximate the constraint in $\cM_{k,l}$, i.e., the constraint in \eqref{eqMt} as $|q(\Delta_{k,l})| \geq g_0, \forall k,l,$ where $q(x)=1+\frac{1}{6}\left(1-\frac{N^2_t}{M^2}\right)x^2$, which is obtained by the Taylor expansion $\frac{\sin\left(\frac{N_t}{M}x\right)} {\frac{N_t}{M}\sin(x)} = q(x) + \cO(x^4)$ derived at $x =0$. Noting that $|q(\Delta_{k,l})| \geq q(\Delta_{k,l})$, the problem in \eqref{eqMt} is relaxed to  
%\vsapce{-0.5cm}
\begin{equation}
\label{eqMopt1}
M^{\star} =\argmin_{M}  M \text{ subject to }  M \in \cN_t, ~ q(\Delta_{k,l}) \geq g_0,\forall k,l. 
%\vsapce{-0.5cm}
\end{equation}

About the optimal solution to \eqref{eqMopt1}, denoted as $M^{\star}$, we have the following theorem.  
 \begin{figure*}[htb]
        \centering
        \subfloat[]{%
        \includegraphics[scale=.45,trim=0.5cm 0.0cm 0.0cm 0.0cm]{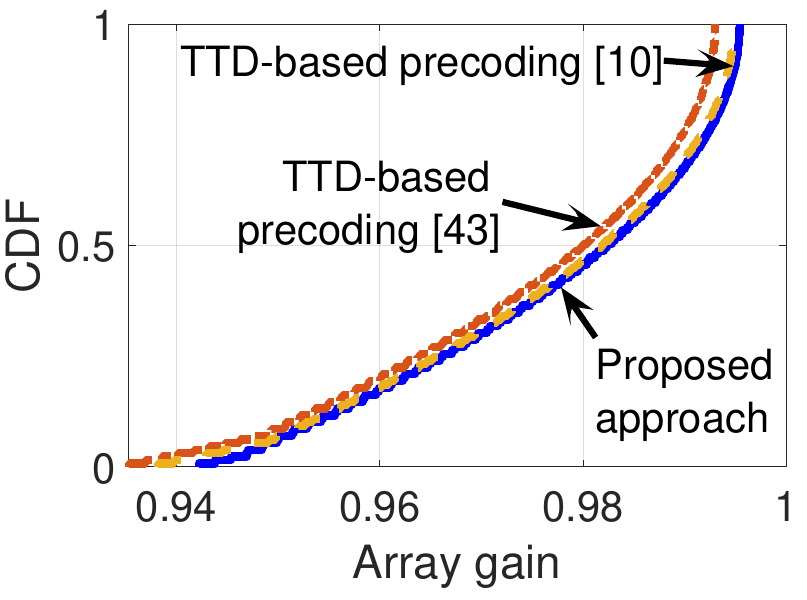}%
        \label{figV1a}%
        }%
%        \hfill
        %\centering
        \subfloat[]{%
        \includegraphics[scale=.45,trim=0.0cm 0.0cm 0.0cm 0cm]{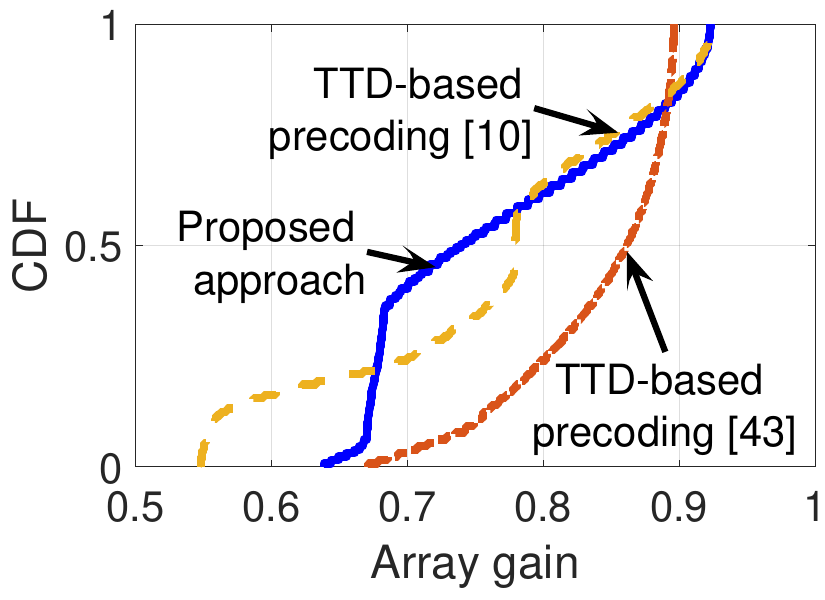}%
        \label{figV1b}%
        }%
%     \hfill
         \subfloat[]{%
        \includegraphics[scale=.45,trim=0.0cm 0.0cm 0.0cm 0cm]{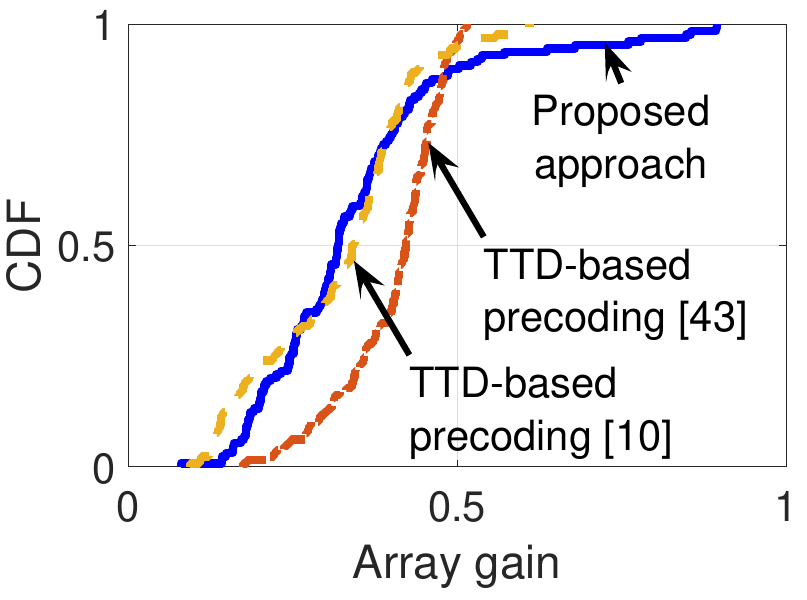}%
        \label{figV1c}%
        }%
        \caption{\!CDF of the array gain with 
        different number of transmit antennas ($N_t$) values:\! (a) $N_t \!=\! 128$, (b) $N_t \!=\! 256$, and c) $N_t \!=\! 512$.}
    \label{figV1}
 \end{figure*}
 \begin{figure*}[htb]
        \centering
        \subfloat[]{%
        \includegraphics[scale=.45,trim=0.5cm 0.0cm 0.0cm 0.0cm]{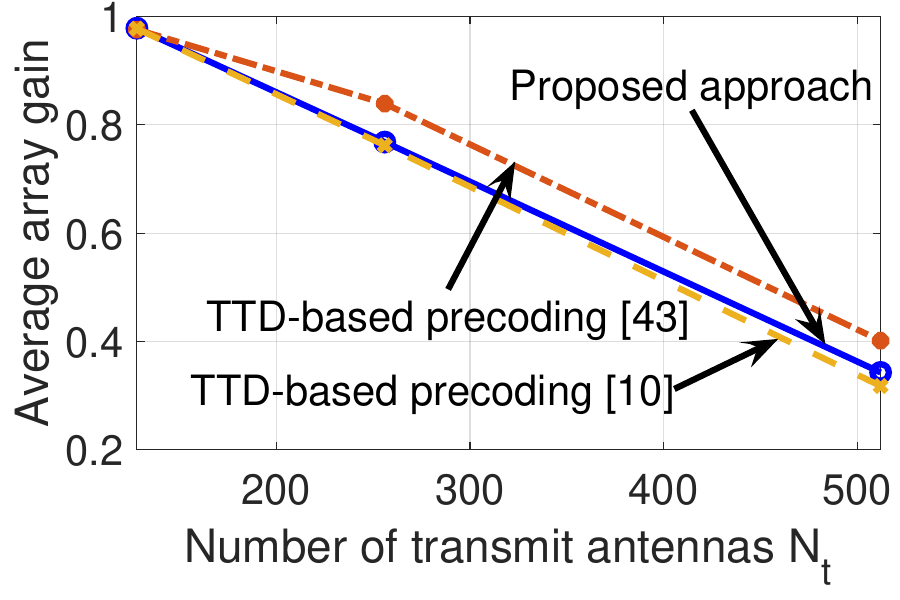}%
        \label{figV2a}%
        }%
        \hspace{1cm}
        \subfloat[]{%
        \includegraphics[scale=.45,trim=0.0cm 0.0cm 0.0cm 0cm]{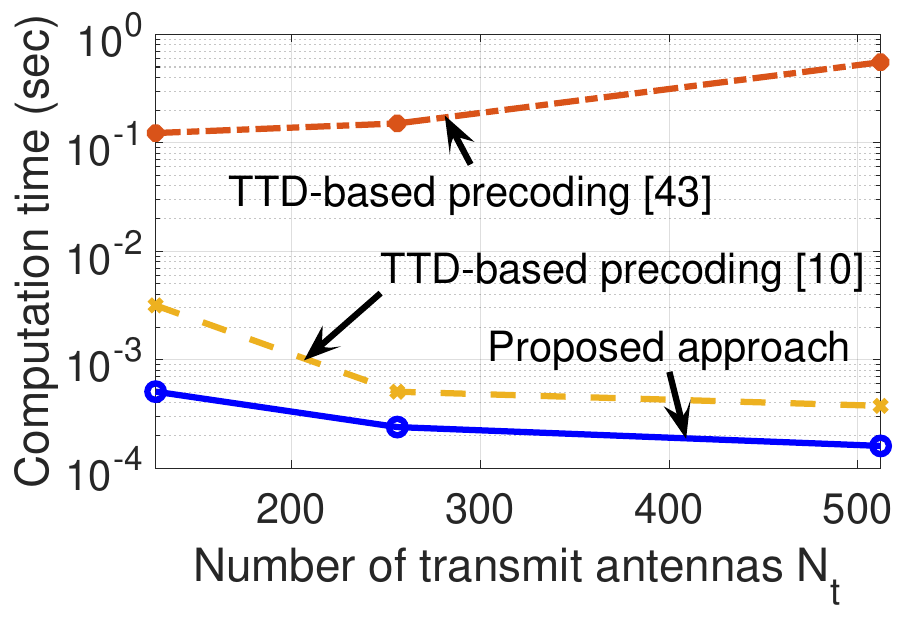}%
        \label{figV2b}%
        }%
        \caption{(a) Average array gain vs. $N_t$ and (b) computation time to design PS and TTD values when $N_t \in \{128,256,512\}$.}
    \label{figV2}
 \end{figure*}
\begin{theorem}
    \label{theorem2}
    Given the fixed  $N_t$, $B$, $f_c$, and $K$, the  ${M}^{\star}$ of \eqref{eqMopt1} is      ${M}^{\star} =  \left\lceil~\sqrt{\frac{N^2_t}{1+\Omega(g_0,B)}}~\right\rceil_{N_t}$,  
    where $\Omega(g_0,B) = \frac{6(1-g_0)}{\left(\frac{\pi}{2} \frac{B}{f_c} \frac{K-1}{2K}\right)^2 \max_l \psi^2_l }$ and $\lceil x \rceil_{N_t}$ denotes the smallest integer greater than or equal to $x$ that is a divisor of $N_t$. 
%\vsapce{-0.3cm}
\end{theorem}
\begin{proof}
    See Appendix~\ref{AppendixD}. 
%\vsapce{-0.3cm}
\end{proof}
\begin{remark}
\label{rmk3}  
To understand the relationship between the $M^{\star}$ in Theorem~\ref{theorem2} and the system parameters, we first relax its integer constraint, leading to \small$M^{\star} = \sqrt{\frac{N^2_t}{1+\Omega(g_0,B)}}$\normalsize. 
In the regime of a large number of OFDM subcarriers ($K \gg 1$), we have $\frac{K-1}{2K} \approx \frac{1}{2}$, leading to   
\small$M^{\star} \approx \sqrt{\frac{N^2_t (\frac{\pi}{4}\frac{B}{f_c})^2\max_l\psi^2_l}{6(1-g_0)}} = \left(\frac{\pi N_t}{4f_c}\sqrt{\frac{\max_l\psi^2_l}{6(1-g_0)}} \right) B$\normalsize.   
Given a fixed $N_t$ and $t_{\max}$, it reveals that the minimum required number of TTDs that ensures a predefined array gain performance grows linearly with the bandwidth $B$. 
{For instance, setting $N_t = 720$ and  $t_{\max} = 1200$ ps meets the optimal condition in (31), the required number of TTDs $M^{\star}$ increases approximately linearly from 20 to 180 as $B$ grows from $10$ GHz to $80$ GHz with the slope $2.4335 \times 10^{-9}$ for $g_0 = 0.9$.} 
\end{remark}
    The proposed approach embodies a fundamental engineering design principle: maximizing system performance within existing hardware constraints. 
    Striking this balance is crucial as it not only improves resource utilization but also enhances the overall performance and feasibility of TTD-based hybrid precoding systems. 
    The proposed approach aligns with the resource-conscious design principle.
\section{Simulation Results}
\label{secV}
{In this section, we demonstrate the benefits of the proposed joint TTD and PS precoding by comparing it with the state-of-the-art closed-form solution in \cite{dai2021} and iterative design with a least square solution in [43, Algorithm 1] in terms of array gain and computation time required to design TTD and PS values.
We set the number of iterations for the benchmark [43, Algorithm 1] as $N_{iter} = 10$ to alternatingly design PS and TTD values. 
The numerical results are conducted on an Intel(R) Core(TM) i7-9850H CPU using MATLAB 2023a.}
Throughout the simulation, the system parameters are set as follows unless otherwise stated: 
the central carrier frequency $f_c \= 300$ GHz, bandwidth $B \= 30$ GHz, the number of OFDM subcarriers $K \= 129$,  number of transmit antennas $N_t \= 256 $,  number of receive antennas $N_r \= 4$,  number of RF chain $N_{RF} \= 4$, and number of data stream $N_{s} \= 4$. 
The number of TTDs per RF chain $M \= 16$, the number of PSs per TTD $N \=16$, and the maximum time delay  {$t_{\max} = 320$ ps}. 
The AoDs and AoAs are uniformly drawn in $[\frac{-\pi}{2},\frac{\pi}{2}]$.  
\subsection{Array Gain Performance and Computation Time}
\label{secVA}
 \begin{figure*}[htb]
        \centering
        \subfloat[]{%
        \includegraphics[scale=.42,trim=0.5cm 0.0cm 0.0cm 0.0cm]{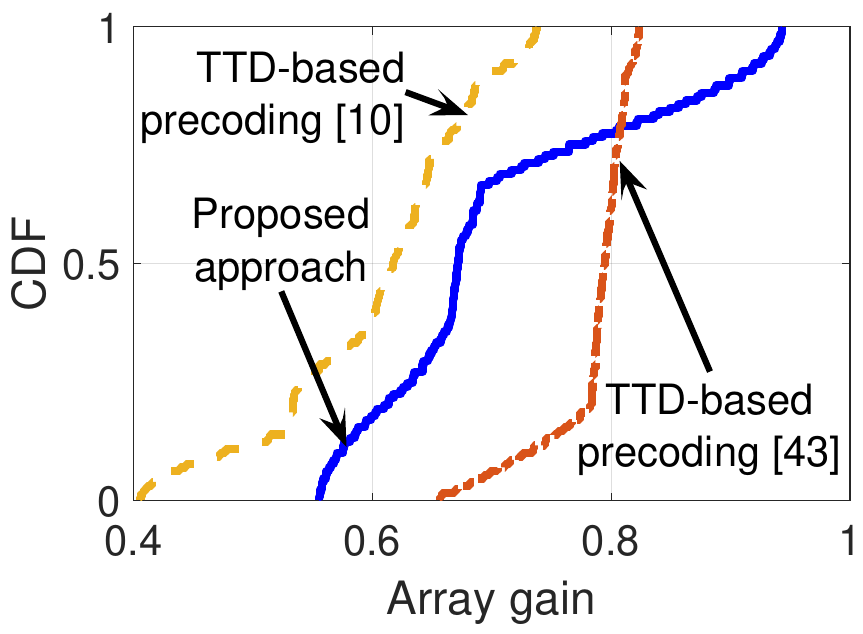}%
        \label{figV3a}%
        }%
%\hspace{1mm}
      %  \hfill
        \subfloat[]{%
        \includegraphics[scale=.44,trim=0.cm 0cm 0.0cm 0cm]{figV1b.pdf}%
        \label{figV3b}%
        }%
%      \hfill
%\hspace{1mm}
         \subfloat[]{%
        \includegraphics[scale=.42,trim=0.0cm 0.0cm 0.0cm 0cm]{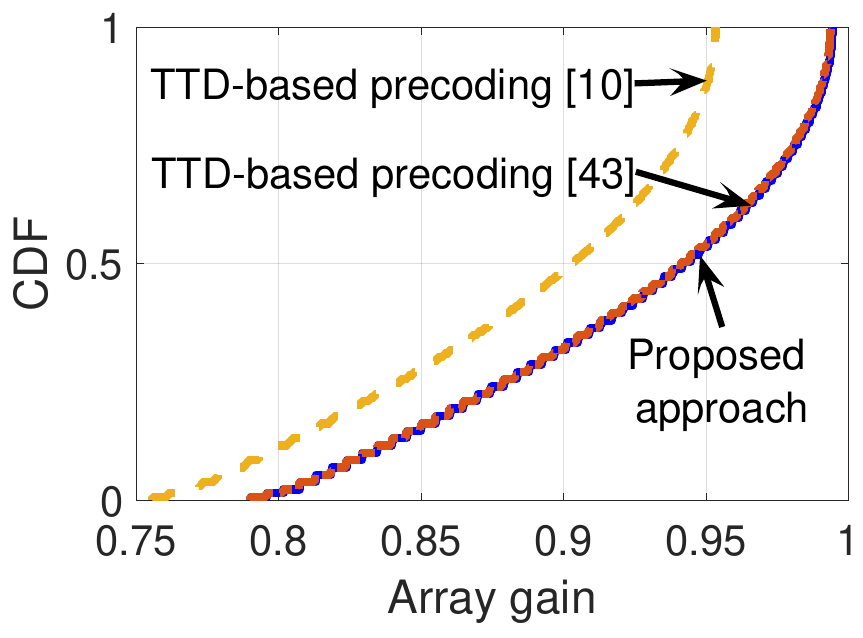}%
        \label{figV3c}%
        }%
       %%\vspace{-0.5cm}
        \caption{\!CDF of array gain with different maximum time delay ($t_{\max}$) values: (a) {$t_{\max} \!=\! 280 $ ps}, (b) $t_{\max} \!=\! 320$ ps, and (c) {$t_{\max} \!=\! 380$ ps}.}
    \label{figV3}
    %\vspace{-0.5cm}
    \end{figure*}
 \begin{figure*}[htb]
        \centering
        \subfloat[]{%
        \includegraphics[scale=.45,trim=0.5cm 0.0cm 0.0cm 0.0cm]{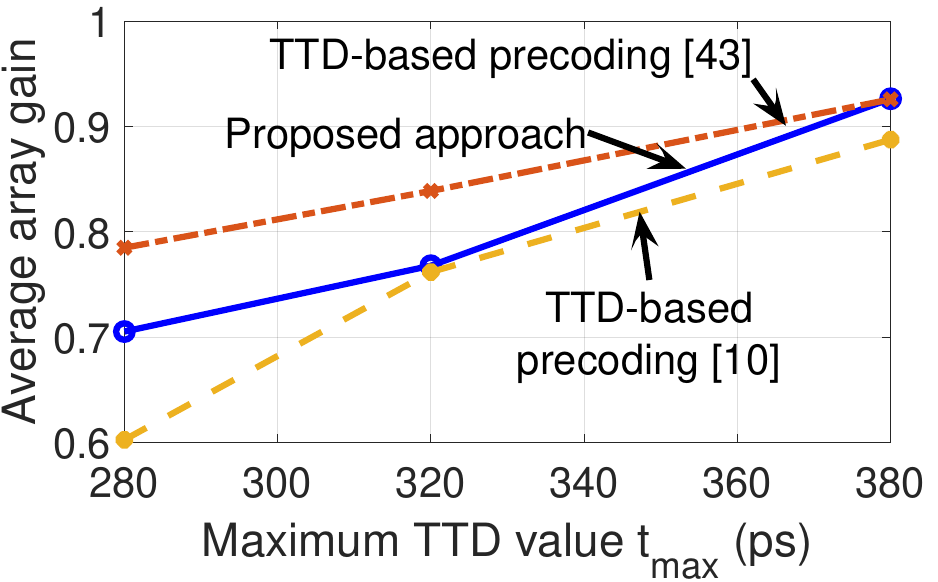}%
        \label{figV4a}%
        }%
        \hspace{1cm}
        \subfloat[]{%
        \includegraphics[scale=.45,trim=0.0cm 0.0cm 0.0cm 0cm]{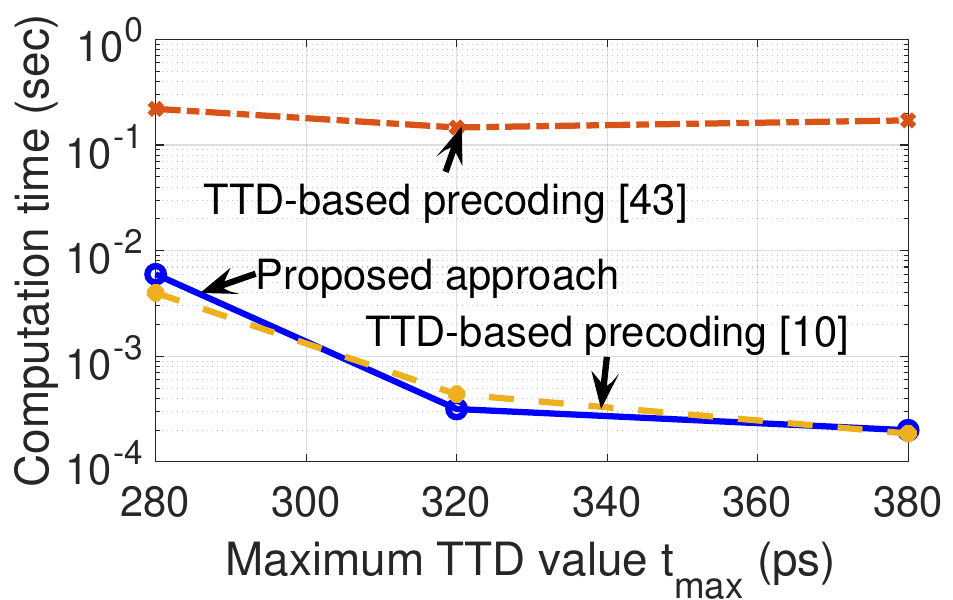}%
        \label{figV4b}%
        }%
        \caption{(a) Average array gain vs. $t_{\max}$ and (b) computation time to design PS and TTD values when $t_{\max} \in \{280,320,380\}$ ps and TTD step size $2$ ps.}
    \label{figV4}
 \end{figure*}

{
Although the derivations in the paper assume continuous PS and TTD values, the following numerical evaluations adopt finite-resolution PS and TTD values to be realistic.}
In particular, we specify the set of feasible PS values as $\mathcal{P}$ with 4-bit quantization, represented by $\mathcal{P} = \{0, \frac{2\pi}{16}, \dots, \frac{15}{16}2\pi\}$.
Similarly, the set of feasible TTD values $\mathcal{T}$ is defined with a $2$-ps-step size, such that $\mathcal{T} = \{0, 2, \dots, 318, 320\}$ ps, accommodating the practical limitations of TTD devices.
Quantization is applied directly to the PS and TTD values derived in Theorem 1, aligning them with the nearest values in $\mathcal{P}$ and $\mathcal{T}$, respectively, to obtain a solution of discrete PS and TTD values. 
To ensure a fair evaluation, this quantization method is also applied to the closed-form solution in benchmark \cite{dai2021} and the iterative least squares solution detailed in \cite[Algorithm 1]{Ratnam2022}.\looseness=-1
 We measure the empirical cumulative distribution function (CDF) of the array gain  $G_{\psi}(x)$ and {the average array gain $\frac{1}{K}\sumK g(\bff^{(l)}_k,\psi_k)$ at the central spatial direction $\psi$. We also measure the corresponding computation time to design PS and TTD values.}
Herein, the CDF of the array gain is computed as  $G_{\psi}(x) = \frac{1}{K}\sumK \mathds{1}_g(k,x)$, where $x$ represents the array gain $0 \leq x \leq 1$ in (4), the indicator function  $\mathds{1}_g(k,x)$ takes values $1$ if $g(\bff^{(l)}_k,\psi_{k}) \leq x$ and $0$ otherwise, and $g(\bff^{(l)}_k,\psi_{k})$ is computed as in (4) with the spatial direction {$\psi = 0.9$}. 
 
Fig.~\ref{figV1} displays the empirical CDF curves of the array gain when the number of transmit antennas $N_t$ takes the values from $\{128,256,512\}$. 
    {When $N_t = 128$, the proposed approach exhibits a superior array gain performance compared to the benchmarks \cite{dai2021,Ratnam2022}.}
    When $N_t = 256$ (Fig.~\ref{figV1b}) and $N_t = 512$ (Fig.~\ref{figV1c}), the proposed approach and benchmarks \cite{dai2021}, \cite{Ratnam2022} suffer from the array gain loss because the degree of beam squint increases as $N_t$ grows (i.e., Proposition~1).
    Nevertheless, the proposed approach provides an enhanced beam squint compensation capability compared to the benchmark \cite{dai2021}. 
    This is due to the fact that as $N_t$ grows, some of the time delay values of the benchmark \cite{dai2021} become larger than $t_{\max} = 320$ ps, in which case the time delay value needs to be quantized to the $t_{\max}$. 
    {Yet, when $N_t = 256$ and $N_t = 512$, the benchmark \cite{Ratnam2022} outperforms the proposed approach and benchmark \cite{dai2021}.
    Notably, in Fig.~\ref{figV1c}, approximately 65\% of OFDM subcarriers of the benchmark \cite{Ratnam2022} achieve array gains $\geq 0.4$, while only 25\% of OFDM subcarriers of the proposed approach achieve similar performance.} %%
    Fig.~\ref{figV2a} displays the corresponding average array gain performances of the array gain CDF curves displayed in Fig.~\ref{figV1}.
     {Observations from Fig.~\ref{figV2a} reveal the proposed approach achieves slightly higher average array gain than the benchmarks \cite{dai2021, Ratnam2022} when $N_t = 128$. 
    When beam squint becomes more severe ($N_t = 256$ and $N_t = 512$), the benchmark \cite{Ratnam2022} achieves a moderately higher average array gain than our proposed approach and benchmark \cite{dai2021}.} 
    This is also consistent with the CDF curves in Fig.~\ref{figV1c}.
Fig.~\ref{figV2b} illustrates the computation time required by our proposed approach and the benchmarks \cite{dai2021}, \cite{Ratnam2022} to achieve the average array gain performances shown in Fig.~\ref{figV2a}. 
The comparison reveals that the proposed approach and benchmark \cite{dai2021} maintain a rather consistent time complexity trend as $N_t$ grows, underscoring the efficiency of one-shot low-complexity designs. 
Conversely, the computation time of the benchmark \cite{Ratnam2022} increases substantially with $N_t$, attributed to its iterative design process which necessitates matrix inversion at each step—a computationally intensive operation.
For instance, at $N_t = 256$, the proposed approach records a swift computation time of {$2.4\times 10^{-4}$} sec, markedly faster compared to the {$0.1508$} sec required by the benchmark \cite{Ratnam2022}. 
This highlights the computational advantage of our one-shot design over iterative methods that involve complex calculations for each round of iteration.
Fig.~\ref{figV3} demonstrates the empirical CDF curves of the array gain when the maximum time delay  $t_{\max}$ takes values in $\{280, 320, 380\}$ ps while fixing $N_t = 256$ and setting the TTD step size to $2$ ps.  
It is observed in Fig.~\ref{figV3} that the proposed approach outperforms the benchmark \cite{dai2021}.  
For example, it can be observed from Fig.~\ref{figV3a} that 33\% of OFDM subcarriers of the proposed approach achieve $\geq 0.7$ array gain compared to that $10\%$ of OFDM subcarriers of the benchmark \cite{dai2021} attain the similar performance. 
{Similar to the trend in Figs.~\ref{figV1b}-\ref{figV1c}, it can be observed from Figs.~\ref{figV3a}-\ref{figV3b} that the benchmark \cite{Ratnam2022} outperforms the proposed approach when the maximum TTD value is not sufficient to compensate for the beam squint.} 
The proposed approach and the benchmark \cite{Ratnam2022} converge to the same array gain performance when $t_{\max} = 380$ (ps).
Fig.~\ref{figV4a} demonstrates the average array gain performance of the corresponding CDF curves shown in Fig.~\ref{figV3}.
{Seen from Fig.~\ref{figV4a}, the proposed approach exhibits a superior average array gain performance compared to the benchmark \cite{dai2021}.}
When $t_{\max} = 380$ ps, the proposed approach and the benchmark {\cite{Ratnam2022}} converge to a similar average array gain performance. 
These observations are consistent with the trends in Fig.~\ref{figV3}.\looseness=-1
{Fig.~\ref{figV4b} illustrates the computation time needed by the proposed approach and benchmarks \cite{dai2021}, \cite{Ratnam2022} to achieve the average array gain performance depicted in Fig.~\ref{figV4a}. 
In alignment with the trends observed in Fig.~\ref{figV2b}, Fig.~\ref{figV4b} corroborates that the proposed approach and the benchmark \cite{dai2021} demand significantly less computation time compared to the benchmark {\cite{Ratnam2022}}. 
Notably, at $t_{\max} = 380$ ps, the benchmark {\cite{Ratnam2022}} necessitates a computation time of {0.146} sec while the proposed approach achieves comparable performance with an impressively reduced computation time of only {$3.16\times 10^{-4}$} sec.}
\subsection{Array Gain Performance Guarantee of 
$\widetilde{M}^{\star}$ in Theorem~\ref{theorem2}}    
\label{secVC}
\begin{figure}[htb]
    \centering
      \includegraphics[width = 0.4\textwidth]{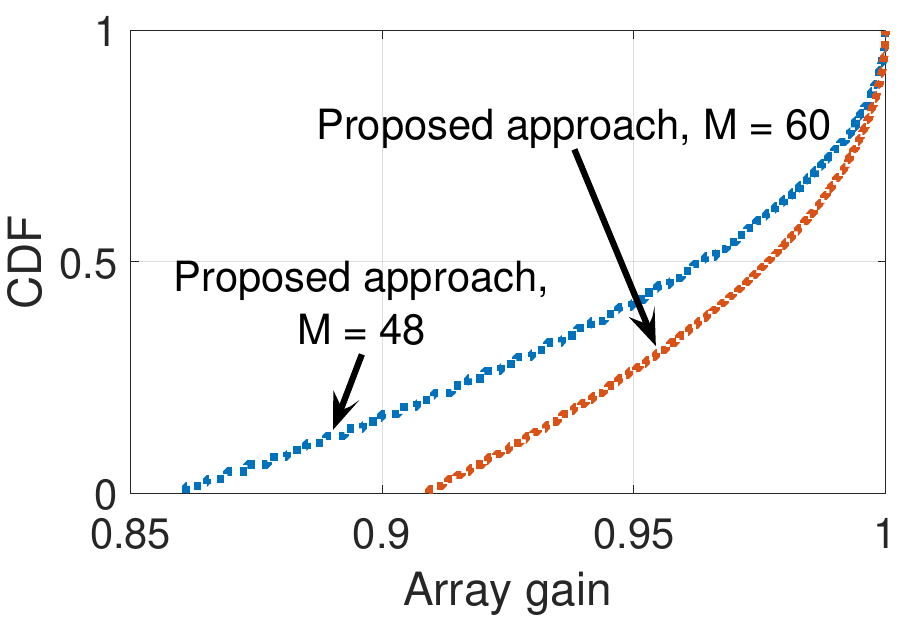}%
        \caption{CDF of array gain of the proposed approach with different number of TTDs per RF chain ($M$) values.}
       \label{figV7}
\end{figure}

In this subsection, we demonstrate the array gain performance guarantee provided by the minimum required number of TTDs per RF chain $\widetilde{M}^{\star}$ in Theorem 2.
Similar to Section~V-A, we evaluate the empirical CDF of the array gain $G_{\psi}(x)$ at the spatial direction $\psi = 0.8$ except that we assume $N_t = 720$ and $t_{\max} = 1000$ ps in this simulation {with 8-bit quantization for the PS values and 2-ps-step size for the TTD values.} 
The array gain threshold is set to $g_0 = 0.9$ in Theorem~2. Incorporating $g_0$ and the system parameters into Theorem~2 yields   $\widetilde{M}^{\star} = \left\lceil~ \sqrt{\frac{720^2}{1+\Omega(0.9,30)}} ~\right\rceil_{720} = 60$.

Fig.~\ref{figV7} demonstrates the CDF of the array gain when $M = \widetilde{M}^{\star} = 60$ and $M = 48$, where $M =48$ is chosen to be the largest divisor of $N_t$ that is smaller than $ \widetilde{M}^{\star} = 60$.
We note that the condition in (31) is satisfied for both values of $M$ when $t_{\max}=1000$ ps.
As shown in Fig.~\ref{figV7}, for the optimized $ \widetilde{M}^{\star} = 60$, every OFDM subcarrier satisfies the array gain $\geq 0.9$. 
However, when $M=48$, there are $18\%$ of the OFDM subcarriers that have the array gain $< 0.9$. 
Hence, the curves in Fig.~\ref{figV7} verify that the $\widetilde{M}^{\star} = 60$ is the minimum number of TTDs per RF chain to provide the minimum array gain $g_0 = 0.9$.
\section{Conclusion}
\label{secVI}
    We presented a new framework to the problem of compensating the beam squint effect arising in wideband sub-THz hybrid massive MIMO systems. 
    We determined the ideal analog precoder that fully compensates for the beam squint. 
    A novel TTD-based hybrid precoding approach was proposed  by jointly optimizing the TTD and PS precoders under the 
    per-TTD time delay constraints. 
    The joint optimization problem was formulated in the context of minimizing the distance between the ideal analog precoder and the product of the PS and TTD precoders.
    By transforming the original problem into the phase domain, the original problem was converted to an {approximated problem}, which allowed us to find a closed-form solution. 
   Based on the closed-form expression of our solution, we presented the selection criteria for the required number of transmit antennas and the value of maximum time delay. 
   Exploiting the proposed joint TTD and PS precoder 
   optimization approach, we determined the minimum number of TTDs required to achieve an array gain performance guarantee while minimizing the analog precoder power consumption.   
 {Simulations corroborated the advantages of our joint TTD and PS optimization approach by demonstrating guaranteed array gain performance and reduced computational time.}
\section*{{Acknowledgment}} 
\addcontentsline{toc}{section}{Acknowledgment}
{The authors are grateful to Reviewer 3 for insightful technical comments and the shared source codes for evaluating the array gain performance of \cite{Ratnam2022}. These contributions have significantly enhanced the quality of this work.}
 \appendices
 \section{Proof of lemma~\ref{lm1}}
  \label{AppendixB}
     Without loss of generality, we assume  $0 < |x - y| < \pi$. 
     Then, ${\argmin}_{{0 <|x-y| < \pi }}|e^{jx}-e^{jy}| = {\argmin}_{0 <|x-y|<\pi}\Big|\sin\Big(\frac{x-y}{2}\Big)\Big|={\argmin}_{0 <|x-y| <\pi}\sin\Big(\Big|\frac{x-y}{2}\Big|\Big)= \argmin_{0 <|x-y| <\pi} |x-y|$\normalsize, 
    where the last step follows from the fact that $\sin(z)$  is an increasing function of $0 < z < \frac{\pi}{2}$. This completes the proof.\looseness=-1 
\section{Proof of Theorem~\ref{theorem1}}
    \label{AppendixTheorem1}
    To prove {Theorem}~\ref{theorem1}, we start by formulating the Lagrangian of  \eqref{eq:lcqp}, which is given by 
    \begin{multline}
        \label{eq:Lagrangian}
        \cL(\ba^{(l)}_m,\lambda_1, \lambda_2) =  \ba^{(l)T}_m\bC\ba^{(l)}_m -2\bd^{(l) T}_m\ba^{(l)}_m +         \\ \lambda_1(\be_{N+1}^T\ba^{(l)}_m -  \vartheta_{\max}) +  \lambda_2(-\be_{N+1}^T\ba^{(l)}_m),
    \end{multline}
     where $\lambda_1 \geq 0$ and $\lambda_2 \geq 0$ are the Largrangian multipliers.
    After incorporating the first order necessary condition for $\ba_m^{(l)}$ in \eqref{eq:Lagrangian},  the Karush–Kuhn–Tucker (KKT) conditions of \eqref{eq:lcqp} are given by: 
    \begin{equation}
         \left\{ \begin{array}{ll}
        \label{KKT}
        2\bC \ba^{(l)}_m -2\bd^{(l)}_m + \lambda_1 \be_{N+1} - \lambda_2\be_{N+1} & = 0,  \\
         \lambda_1(\be_{N+1}^T\ba^{(l)}_m -  \vartheta_{\max}) &= 0, \\ 
         \lambda_2(-\be_{N+1}^T\ba^{(l)}_m) &= 0,\\
        \lambda_1 \geq 0, \lambda_2 \geq 0. 
        \end{array} \right.
    \end{equation}\normalsize   
    When $\lambda_1 = \lambda_2 = 0$, \eqref{KKT} yields $\ba^{(l)}_{m} = \bC^{-1}\bd^{(l)}_m$ for $0 \leq \be_{N+1}^T\bC^{-1}\bd^{(l)}_m \leq \vartheta_{\max}$ because $0 \leq \be_{N+1}^T\ba^{(l)}_m \leq \vartheta_{\max}$. 
    When $\lambda_1 > 0$ and $\lambda_2 = 0$, \eqref{KKT} gives $\ba^{(l)}_m = \bC^{-1}(\bd^{(l)}_{m} - \frac{1}{2}\lambda_1 \be_{N+1})$ for $ \be_{N+1}^T\bC^{-1}\bd^{(l)}_{m} > \vartheta_{\max}$ because $\be^T_{N+1}\ba^{(l)}_m = \vartheta_{\max}$ and $\be^T_{N+1}\ba^{(l)}_m = \be_{N+1}^T\bC^{-1}\bd^{(l)}_{m} - \frac{1}{2}\lambda_1\frac{1}{\eta}$, where the last equality follows from \eqref{eqCe}.  
    When $\lambda_1 = 0$ and $\lambda_2 >0$, we obtain $\ba^{(l)}_m = \bC^{-1}(\bd^{(l)}_m + \frac{1}{2}\lambda_2\be_{N+1})$ for $\be_{N+1}^{T}\bC^{-1}\bd^{(l)}_m < 0$ because $\be^T_{N+1}\ba^{(l)}_m = 0$ and $\be^T_{N+1}\ba^{(l)}_m = \be^T_{N+1}\bC^{-1}\bd^{(l)}_m + \frac{1}{2}\lambda_2\frac{1}{\eta}$, where the last equality follows from \eqref{eqCe}. 
    However, by \eqref{eqCde} we have $\be_{N+1}^{T}\bC^{-1}\bd^{(l)}_m = \frac{(2m-1)N-1}{2}\psi_{l} \geq 0$, where the last inequality is due to the sign invariance property in \eqref{eqSIP}. 
    Therefore, the last case leads to a contradiction. 
    In summary, solving \eqref{KKT} gives
    %%
    %%
   %%\vspace{-0.1cm}
    %\small
     \begin{subnumcases}
     {\label{eqCFKKT} {\scalemath{0.9}{\ba^{(l)}_{m}}^{\star}} = }
    \scalemath{0.9}{\bC^{-1}\bd^{(l)}_m}, ~$\text{\normalsize if } \scalemath{0.9}{0 \leq \frac{(2m-1)N-1}{2}\psi_{l} \leq \vartheta_{\max}}$,  \label{eqCFKKTa} \\
    \scalemath{0.9}{\bC^{-1}\left(\bd^{(l)}_{m} -\frac{1}{2} \lambda_1 \be_{N+1}\right)},  ~$\text{\normalsize otherwise,}$ \label{eqCFKKTb}
    \end{subnumcases}
    \normalsize
    where $\lambda_1$ $=$ $2\eta\left(\frac{(2m-1)N-1}{2}\psi_l-\vartheta_{\max}\right)$. 
    
    To further simplify the closed-form solution in \eqref{eqCFKKT}, we deduce ${\ba^{(l)}_{m}}^{\star}$ in \eqref{eqCFKKTa} by simplifying $\bC^{-1}\bd^{(l)}_m$. Using \eqref{eqdlm}  gives
    %\begin{dmath}
    %\vspace{-0.3cm}
    %\small
    \begin{equation}
    \label{eqCdlm}
    \d4\scalemath{0.9}{       \bC^{-1}\bd^{(l)}_m 
      = \begin{bmatrix} \frac{1}{K}\sumK \bb^{(l)}_{k,m} +\frac{1}{\eta}\frac{1}{K}\sumK(1-\zeta_k) \mathbf{1}_N\mathbf{1}^T_N\bb^{(l)}_{k,m} \\ \frac{1}{\eta}\frac{1}{K}\sumK(1-\zeta_k) \mathbf{1}^T_N\bb^{(l)}_{k,m} \end{bmatrix}\!\!.}\!\!
      %\vspace{-0.2cm}
    \end{equation}
    \normalsize
    %\end{dmath}
    %Noting that $\bb^{(l)}_{k,m} \= [-\zeta_k\gamma^{(l)}_{1,m}~\dots~-\zeta_k\gamma^{(l)}_{N,m}]^T$, 
    Using the definition of $\bb^{(l)}_{k,m}$ $=$ $[-\zeta_k\gamma^{(l)}_{1,m}~\dots~-\zeta_k\gamma^{(l)}_{N,m}]^T$, the $n$th entry of $\widetilde{\bd}^{(l)}_m \triangleq \bC^{-1}\bd^{(l)}_m$ in \eqref{eqCdlm}, where $1\leq n \leq N$, is given by 
    %\vspace{-0.2cm}
    %\small 
    \begin{equation}
    \begin{aligned}
             \widetilde{\bd}^{(l)}_m(n,1) &= \frac{1}{K} \gamma^{(l)}_{n,m}\sumK-\zeta_k + \frac{1}{\eta K} \sumN \gamma^{(l)}_{n,m} \sumK (\zeta^2_k-\zeta_k) \\&= \frac{N-2n+1}{2}\psi_l,  \label{eqCdlm2c}
    \end{aligned}
    \end{equation}
    where the last equality in  \eqref{eqCdlm2c} is due to \eqref{eqZeta} and the fact that $\gamma^{(l)}_{n,m} = ((m-1)N+n-1)\psi_l$. The $(N+1)$th entry of $\widetilde{\bd}^{(l)}_m$ in \eqref{eqCdlm} is 
    %\small
    \begin{equation}
    \label{eqCdlm3}
        \widetilde{\bd}^{(l)}_m(N+1,1) =  \frac{1}{N}\sumN \gamma^{(l)}_{n,m} = \frac{(2m-1)N-1}{2}\psi_l.   
    \end{equation}\normalsize 
    Therefore, substituting \eqref{eqCdlm2c} and \eqref{eqCdlm3} into \eqref{eqCFKKTa} leads to \eqref{eqPSa} and \eqref{eqTTDa}, respectively.   
    Next, \eqref{eqCFKKTb} can be rewritten as      
    \begin{equation}
    \label{eqcond2}
\scalemath{0.9}{    \bC^{-1}(\bd^{(l)}_m-\frac{1}{2}\lambda_1\be_{N+1}) \!=\! \widetilde{\bd}^{(l)}_m \!-\!  \left(\!\frac{(2m-1)N-1}{2}\psi_l\!-\!\vartheta_{\max}\!\right)\!\mathbf{1}_{N+1}. }
    \end{equation}
    \normalsize
    Incorporating \eqref{eqCdlm2c}  and \eqref{eqCdlm3} into \eqref{eqcond2} leads to \eqref{eqPSb} and \eqref{eqTTDb}, respectively,  completing the proof.\looseness=-1  
{\section{Proof of Remark~\ref{lmAGloss}}\label{AppendixE}
From the assumption in Remark~\ref{lmAGloss}, we have $t^{(l)\star}_{m} = \frac{(2m-1)N-1}{4f_c} \psi_l$, for $m = 1,\dots, M'$ and $t^{(l) \star}_{m} = t_{\max}$ for $m = M'+1,\dots, M$ at the $l$th RF chain. 
The PS values are, respectively, given by \eqref{eqPSa} and \eqref{eqPSb}, $x^{(l) \star}_{n,m} = \frac{N-2n+1}{2}\psi_l$, for $m = 1, \dots,M'$ and $x^{(l) \star}_{n,m'} = \vartheta_{\max} - \gamma^{(l)}_{n,m}$, for $m' = M'+1, \dots,M$.   
Then, the array gain $g(\bff_k^{(l)},\psi_{k,l})$ defined in \eqref{eq5} is given by 
\begin{equation}    
\begin{aligned}
&g(\bff_k^{(l)},\psi_{k,l})\\
&= \scalemath{1}{\frac{1}{N_t}\Big|\sum_{m=1}^{M'}\sumN e^{j \pi \xi_k \gamma^{(l)}_{n,m}}e^{j \pi \frac{N-2n+1}{2}f_c} e^{-j\pi \xi_k \frac{(2m-1)N-1}{2}\psi_l}} \\
&+ \scalemath{1}{\sum_{m'=M'+1}^{M} \sumN e^{j \pi \xi_k \gamma^{(l)}_{n,m'}}e^{j \pi (\vartheta_{\max} - \gamma^{(l)}_{n,m'})} e^{-j\pi \xi_k \vartheta_{\max}} \Big|}\\
&= \scalemath{0.85}{\frac{1}{N_t}\Big| M'\frac{\sin(N\Delta_{k,l})}{\sin(\Delta_{k,l})} + e^{-j\pi(\xi_k-1)\vartheta_{\max}} \sum_{m'=M'+1}^{M}\sumN e^{j \pi (\xi_k-1) \gamma^{(l)}_{n,m'}}\Big|}.\label{eqAGlossBound2}
\end{aligned}
\end{equation}   
Applying the triangle inequality to \eqref{eqAGlossBound2} yields   
\begin{equation}
    \label{eq:GainUB}
    \frac{M'}{N_t}\Big|\frac{\sin(N\Delta_{k,l})}{\sin(\Delta_{k,l})}\Big| - \frac{(M-M')N}{N_t} \leq g(\bff_k^{(l)},\psi_{k,l}),
\end{equation}
where $\frac{M'}{N_t}\Big|\frac{\sin(N\Delta_{k,l})}{\sin(\Delta_{k,l})}\Big|$ is the array gain achieved by the first $M'$ TTD precoding $\{t^{(l) \star}_{m}\}_{m=1}^{M'}$ and $\frac{(M-M')N}{N_t}$ is an upper bound of the array gain obtained by the last $(M-M')$ TTDs precoding $\{t^{(l)\star}\}_{m= M'+1}^{M}$. 
From \eqref{eq:GainUB} and noting that $N_t = MN$, the array gain loss $\Big|\frac{\sin(N\Delta_{k,l})}{N\sin(\Delta_{k,l})}\Big| - g(\bff_k^{(l)},\psi_{k,l})$ is bounded by 
\begin{equation}
\begin{aligned}
\label{eq:GainUB2}
&\scalemath{0.9}{0 \leq \Big| \frac{\sin(N\Delta_{k,l})}{N\sin(\Delta_{k,l})} \Big| - g(\bff_k^{(l)},\psi_{k,l})} \\ &~~\scalemath{0.9}{\leq \scalemath{1}{\frac{M-M'}{M}\Big|\frac{\sin(N\Delta_{k,l})}{N\sin(\Delta_{k,l})}\Big| + \frac{(M-M')}{M}},}
\end{aligned}
\end{equation}
which completes the proof.}
\section{Proof of theorem~\ref{theorem2}}
  \label{AppendixD}    
   The second constraint in \eqref{eqMopt1} can be rewritten as  
\begin{equation}
\begin{aligned}
        g_0 & \leq 1+\frac{1}{6}\left(1-\frac{N^2_t}{M^2}\right) \max_{k,l} \left(\frac{\pi}{2} \frac{B}{f_c} \left(\frac{k-1-\frac{K-1}{2}}{K}\right)\psi_l\right)^2 \\ &= 1+\frac{1}{6}\left(1-\frac{N^2_t}{M^2}\right)\left(\frac{\pi}{2} \frac{B}{f_c} \frac{K-1}{2K}\right)^2 \max_l \psi^2_l, \label{eqTaylor2c}
\end{aligned}
\end{equation}
where the upper bound in \eqref{eqTaylor2c} follows from the substitution of $\Delta_{k,l}$ $=$ $\frac{\pi}{2}(\zeta_k-1)\psi_l$ with $\zeta_k$ defined in \eqref{eqSDc} and the last equality in \eqref{eqTaylor2c} is due to the substitutions $k = K$ and $\psi_l^2 = \max_{l} \psi^2_l$. Then, the inequality in \eqref{eqTaylor2c} can be rewritten as $ M \geq \sqrt{\frac{N^2_t}{1+\Omega(g_0,B)}}$, where $\Omega(g_0,B) =\frac{6(1-g_0)}{\left(\frac{\pi}{2} \frac{B}{f_c} \frac{K-1}{2K}\right)^2 \max_l \psi^2_l }$, leading to ${M}^{\star}= \left\lceil \sqrt{\frac{N^2_t}{1+\Omega(g_0,B)}} \right\rceil_{N_t}$ because $M \in \cN_t$. This completes the proof.
\bibliographystyle{IEEEtran} 
\bibliography{biblib} 
%% qua's bio
\begin{IEEEbiography}
[{\includegraphics[width=1in,height=1.25in,clip,keepaspectratio]{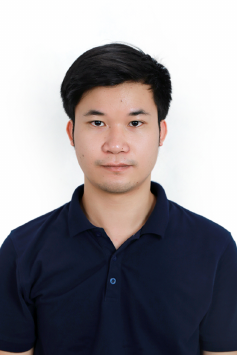}}]{Dang Qua Nguyen} (Student Member, IEEE)
 received a B.Eng. degree in Electronics and Telecommunications from Hanoi University of Science and Technology (HUST), Hanoi, Vietnam in 2019. 
 He is currently pursuing a Ph.D. degree in Electrical Engineering and Computer Science at The University of Kansas (KU), KS, USA, starting from September 2020. 
 His research interests are statistical signal processing, wireless communication, and federated learning.
\end{IEEEbiography}
%% Prof. Kim's bio
\begin{IEEEbiography}
[{\includegraphics[width=1in,height=1.25in,clip,keepaspectratio]{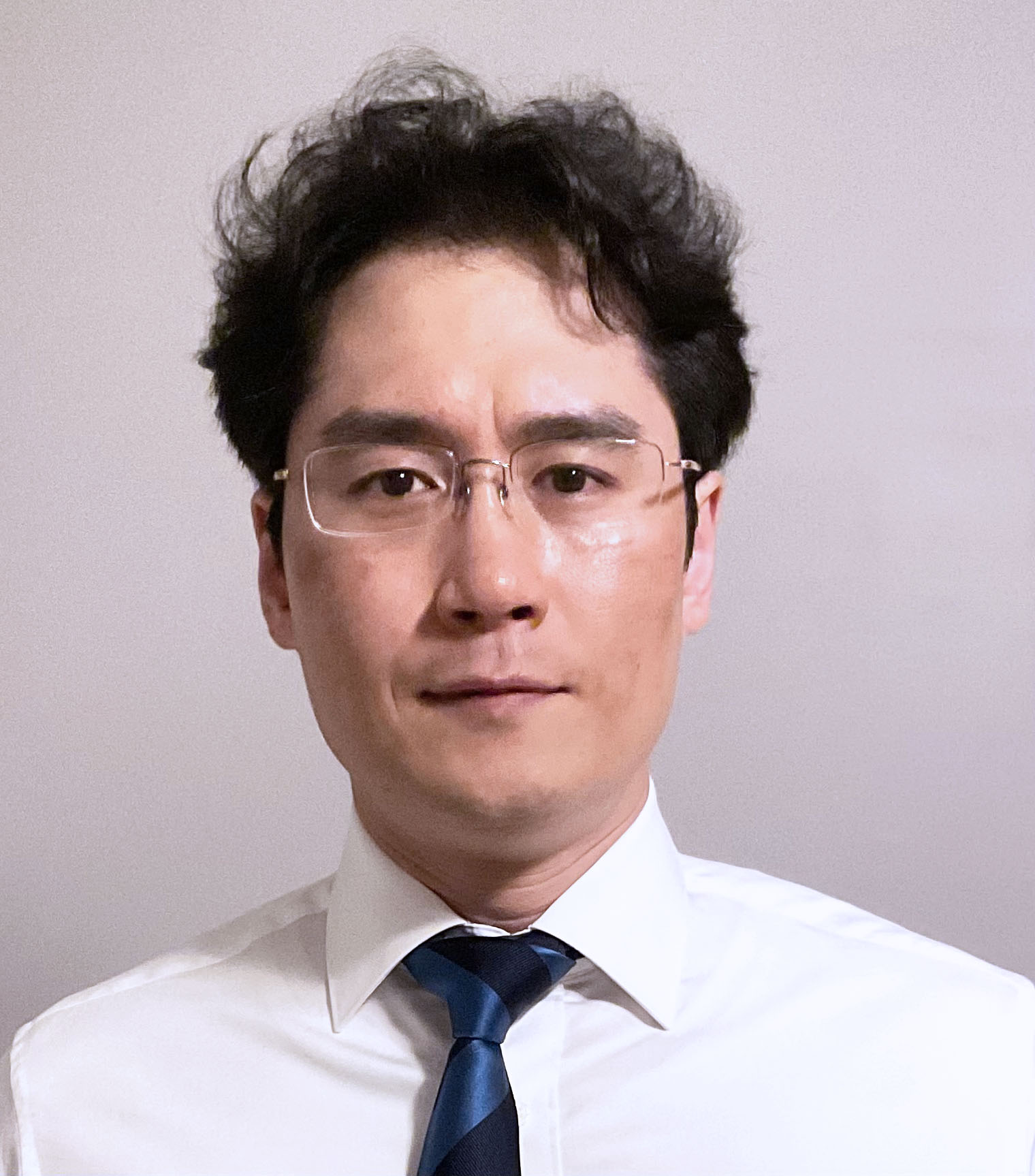}}]{Taejoon Kim}
(Senior Member, IEEE) received the Ph.D. degree in electrical and computer engineering from Purdue University, West Lafayette, IN, USA. He is currently an Associate Professor in the School of Electrical, Computer and Energy Engineering, Arizona State University (ASU). Before joining ASU, he held positions with Nokia Bell Laboratories, KTH Royal Institute of Technology, City University of Hong Kong, and University of Kansas (KU). He is an Associate Editor for the IEEE Transactions on Wireless Communications and served as an Associate Editor of the IEEE Transactions on Communications and Guest Editor for the IEEE Transactions on Industrial Informatics. He holds 29 issued U.S. patents. His research interests are in the design and analysis of wireless communication systems, Next-G wireless systems, multiple-input multiple-output (MIMO) communications, millimeter wave MIMO, statistical signal processing, distributed and reinforcement learning, and network security. He was a recipient of the Harry Talley Excellence in Teaching Award, the Faculty of the Year Award, and the Miller Professional Development Award in Research. Along with the coauthors, he won the IEEE Communications Society Stephen O. Rice Prize in 2016 and the IEEE PIMRC 2012 Best Paper Award. 
\end{IEEEbiography}
\end{document}